\newtheorem{theorem}{Theorem}
\newtheorem{corollary}[theorem]{Corollary}
\newtheorem{proposition}{Proposition}
\newcommand{\comment}[1]{} 
\newcommand{\idty}[1]{\mathbb{1}} 
\newcommand{\ovsqrt}[1]{\frac{1}{\sqrt{2}}}
\newcommand\blfootnote[1]{%
  \begingroup
  \renewcommand\thefootnote{}\footnote{#1}%
  \addtocounter{footnote}{-1}%
  \endgroup
}
\title{Constructing graphs with limited resources}
\begin{document}

\author{Danial Dervovic\thanks{Department of Computer Science, University College London, Gower Street, London WC1E 6BT, United Kingdom.} \and 
Avinash Mocherla\thanks{Imperial College London, Kensington, London SW7 2AZ, United Kingdom.} \footnotemark[1] \and 
Simone Severini\footnotemark[1] \thanks{Shanghai Jiao Tong University, 1954 Huashan Rd, JiaoTong DaXue, Xuhui Qu, Shanghai Shi, China, 200000.}}

\date{\today}

\maketitle

\begin{abstract}
We discuss the amount of physical resources required to construct a given graph, where vertices are added sequentially. We naturally identify information -- distinct into instructions and memory -- and randomness as resources. Not surprisingly, we show that, in this framework, threshold graphs are the simplest possible graphs, since the construction of threshold graphs requires a single bit of instructions for each vertex and no use of memory. Large instructions without memory do not bring any advantage. With one bit of instructions and one bit of memory for each vertex, we can construct a family of perfect graphs that strictly includes threshold graphs. We consider the case in which memory lasts for a single time step, and show that as well as the standard threshold graphs, linear forests are also producible. We show further that the number of random bits (with no memory or instructions) needed to construct any graph is asymptotically the same as required for the Erd\H{o}s-R\'enyi random graph. We also briefly consider constructing trees in this scheme. The problem of defining a hierarchy of graphs in the proposed framework is fully open.
\end{abstract}

\blfootnote{Correspondence address: \href{mailto:d.dervovic@cs.ucl.ac.uk}{\texttt{d.dervovic@cs.ucl.ac.uk}}.}
\section{Introduction}

\emph{Alice} (\textsf{A}) and \emph{Bob} (\textsf{B}) work together to construct a graph. At
each time step $t\in \mathbb{N}$, \textsf{A} sends \emph{instructions} to 
\textsf{B}. Then, \textsf{B} constructs a graph $G_{t}=(V(G_t),E(G_t))$ according to
the instructions, by adding vertices and edges to the graph $G_{t-1}$. We
will try to work in the simplest possible setting. It will be convenient to assume that $G_{t}$ is on $t$ vertices, $1,2,\ldots,t$, and that
vertex $t$ is added at time $t$. Thus, the neighbours of vertex $t$ at time $%
t$ are vertices of $G_{t-1}$. In
addition to instructions, \emph{i.e.}, \emph{information}, \textsf{B} may
have access to a \emph{memory} and a source of \emph{randomness} (\emph{e.g.}%
, \textsf{B} can flip a coin). Of course, memory is also information, but it
seems useful to distinguish between information received at time $t$ from
information received at time $t^{\prime }<t$. These three quantities will be
called \emph{resources}. Notice that we do not consider potential loss of
information. In other words, when we say \textquotedblleft \textsf{A} sends
instructions to \textsf{B}\textquotedblright , we do not mean that there is
an information-theoretic (noisy) channel between \textsf{A }and \textsf{B}.

Instructions and memory consist of bit strings -- more generally, we could
use the symbols of an alphabet. We assume that memory is not required to
remember the effect of each instruction. For instance, if the instruction $%
1011$ at time $t$ is associated to a certain action, say $a$, then $1011$ is
associated to $a$ at every time $t^{\prime}>t$, but there is no need to
store $1011$. Obviously, but importantly, no action can be performed without
the relevant resources -- \emph{e.g.}, adding an edge between a specific
pair of vertices, making a random choice, \emph{etc.} Since $(t^{2}-t)/2$
bits are needed to specify $G_{t}$ exactly, it is intuitive that \textsf{B }%
requires $\Theta(t^{2})$ coin flips to construct $G_{t}$ with randomness
only, whenever instructions are unavailable -- see Section~\ref{sec:randomness} for a thorough description of this point. The process introduced in \cite{JS13}, which was
originally studied in the context of graph limits, aimed at defining a
notion of \emph{likelihood} for a graph $G_{t}$ constructed exactly in the
way described above.

Here, we take a different perspective and look at the amount of instructions,
memory, and randomness needed by \textsf{A} and \textsf{B} for constructing
graphs. We mostly focus on the role of \textsf{B}. The proposed framework
has clearly many variations. We skim over the basic ones. The central
question is: what graphs can \textsf{B} construct with the use of limited
resources? \textsf{B} has full freedom of interpreting instructions, but no
computational power (\emph{e.g.}, \textsf{B} cannot count). A setting in
which graphs are seen as constructed/defined by computationally bounded
distributed agents is proposed by Arora \emph{et al.} \cite{A}.

We will need some standard notation. Given graphs $G$ and $H$, the \emph{%
disjoint union} is the graph $G\uplus H$ such that $V(G\uplus H) = V(G)\uplus V(H)$ and $\{i,j\}\in E(G\uplus H)$ if and only if $\{i,j\}\in
E(G)$ or $\{i,j\}\in E(H)$. We denote by $P_{n}$, $C_{n}$, $K_{n}$, $E_{n}$,
and $K_{l,m}$, the $n$-path, the $n$-cycle, the complete graph on $n$
vertices, the empty graph on $n$ vertices, and the complete bipartite graph
on $l+m$ vertices, respectively. (See \cite{D} for the elementary graph
theory that we use.)

Let us denote by $A(G)$ the adjacency matrix of $G$. The $(i,j)$-th entry of $A(G)$ is\ $[A(G)]_{i,j}=1$ if $\{i,j\}\in E(G)$ and $[A(G)]_{i,j}=0$,
otherwise. The following fact is obvious, but worth a mention for the sake
of completeness:

\begin{proposition}
Every graph can be constructed by adding vertices sequentially.
\end{proposition}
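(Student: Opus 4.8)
The plan is to give a constructive argument showing that any graph $G$ on $n$ vertices can be built up one vertex at a time, matching the sequential framework described in the introduction. First I would fix an arbitrary labelling of the vertices of $G$ as $1, 2, \ldots, n$, and define the sequence of graphs $G_1, G_2, \ldots, G_n$ where $G_t$ is the subgraph of $G$ induced on the vertex set $\{1, 2, \ldots, t\}$. By construction $G_n = G$, and each $G_t$ lives on exactly $t$ vertices, consistent with the convention that vertex $t$ is added at time $t$.

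The key step is to verify that $G_t$ is obtained from $G_{t-1}$ by a legal move in the scheme, namely adding the single new vertex $t$ together with edges only to vertices already present. Since $G_t$ is an induced subgraph of $G$, the neighbours of vertex $t$ within $G_t$ are precisely those $j \in \{1, \ldots, t-1\}$ with $\{j,t\} \in E(G)$, all of which are vertices of $G_{t-1}$. Thus every edge added at time $t$ joins $t$ to an earlier vertex, and no edges among $\{1, \ldots, t-1\}$ are altered. This confirms that the passage from $G_{t-1}$ to $G_t$ respects the rule that vertex $t$'s neighbours at time $t$ lie in $G_{t-1}$.

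There is essentially no obstacle here; the statement is deliberately labelled as obvious in the excerpt, and the only thing to make explicit is that the induced-subgraph filtration realises the sequential construction exactly. I would close by noting that the instructions required at each step simply enumerate the neighbours of vertex $t$ among the previous vertices, so the construction is well defined regardless of the structure of $G$. Since the choice of labelling was arbitrary, every graph admits such a sequential construction, which establishes the proposition.
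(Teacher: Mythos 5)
Your proof is correct and is essentially the same argument as the paper's: both fix an ordering of the vertices and observe that vertex $t$'s neighbours among $\{1,\ldots,t-1\}$ are already present, so the induced subgraphs on initial segments realise the sequential construction. The paper merely phrases this in terms of the rows/columns of the adjacency matrix $A(G)$ being completed in order, whereas you state the induced-subgraph filtration directly, which is if anything cleaner.
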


\begin{proof}
Given a graph $G$ on $n$ vertices, (when $G$ is undirected)\ the entries in
the \emph{triangle} above the main diagonal are the significant part of $A(G)
$. Label the vertices of $G$ by $1,2,\ldots,n$, where vertex $1$ corresponds to
the $n$-th line of $A(G)$, vertex $2$ to the $(n-1)$-th line, \emph{etc}. --
a \emph{line} is a row or a column. Let us assume that at time $t$ we add
vertex $t=1,2,\ldots,n$. The proof follows by observing that in $G_{t}$ the
neighbours of vertex $t\in V(G_{t})$ correspond to lines $t-i$, with $%
i=2,\ldots,t-1$, of $A(G)$, and that these lines have been already completed
when adding each vertex $i$.
\end{proof}

\bigskip 

We note that \textit{threshold} graphs occur frequently in this model. It is worth providing a proper description of what such a graph is.  A \emph{threshold graph} is a graph which can be constructed from $K_{1}$ by a sequence of two
operations:\ add an isolated vertex or add a dominating vertex. Recall that a
vertex of a graph $G$ is \emph{isolated} if it has degree zero and \emph{%
dominating} if it is adjacent to all other vertices of $G$. The family of
threshold graphs is denoted by $\mathcal{T}$. We make use of the forbidden subgraphs
characterization of $\mathcal{T}$ when classifying a graph as being threshold. This may also be taken as a definition  
\cite{G80}: a graph $G$ is a \emph{threshold graph} if $G$ does not contain $%
P_{4}$, $C_{4}$ and $K_{2}\uplus K_{2}$ as induced subgraphs. Threshold
graphs are unigraphs (that is, completely specified by their degree sequence
up to isomorphism)\ and easy to recognise. Many hard computational problems
are efficiently solvable when threshold graphs are taken as instances.

In the following sections, we consider the graphs that can be constructed by adding vertices sequentially when \textsf{B} has access to different combinations of resources. First, we will study the simplest case, where \textsf{B} receives one bit instructions from \textsf{A}. With no other information, we show that \textsf{B} is limited to constructing threshold graphs. In fact, these graphs prove to be constructible with any combination of resources. In the next case, \textsf{B} is also given access to one bit of memory, which is the ability to label each vertex with either $0$ or $1$ as it is placed, and thus differentiate the vertex set into two groups. We see that \textsf{A} is now able to give instructions which are directed at only one group of vertices, and this  extends the family of constructible graphs beyond threshold graphs, including non-threshold graphs such as bipartite and split graphs. Memory, as it turns out, has further uses when its use is modified. Specifically, we explore two variations of memory available to \textsf{B}: \textit{fading} and \textit{modifiable}. In the first variation, \textit{fading} memory only lasts one time step, which means that \textsf{B} can only remember the label of the last two placed vertices. This allows the formation of a linear forest of path graphs, among other non-threshold graphs. In the second case - \textit{modifiable memory} - memory can be used to modify the edges between previously placed vertices. Despite offering a new construction mechanism for \textsf{B}, it is shown that this does not extend the number of unique graphs beyond normal memory. 
We then consider randomness as the only resource available to \textsf{B}, and show that the resources required to construct any graph $G_{n}$ and the Erd\H{o}s-R\'{e}nyi random graph are equivalent in the asymptotic limit.
In the penultimate section, we modify the procedure for constructing graphs and construct trees using the resources considered previously.
The final section concludes with open questions.

To start, we consider one bit of instruction and no memory, as described above. The main result of this section is the following statement: 

\begin{proposition}\label{prop2}
\label{threshold}Let \textsf{A} send to \textsf{B} one bit
instructions at each time step $t$. Furthermore, assume that \textsf{B} has no
memory and no randomness. Then, $G_{t}$ is a threshold graph.
\end{proposition}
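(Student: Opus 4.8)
The plan is to proceed by induction on $t$ and to show that a single instruction bit, in the absence of memory and computation, leaves \textsf{B} with exactly the two operations that appear in the constructive definition of threshold graphs: adding an isolated vertex and adding a dominating vertex. The base case is immediate, since $G_1 = K_1$ is threshold and the instruction at $t=1$ has no effect (there are no earlier vertices to join). The inductive hypothesis is that $G_{t-1}$ is threshold, and the goal is to conclude the same for $G_t$.

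For the inductive step I would first pin down the actions available to \textsf{B} when vertex $t$ is added. Since \textsf{B} only adds vertices and edges and never deletes, the entire effect of the step is captured by the neighbourhood $S \subseteq \{1,\dots,t-1\}$ assigned to the new vertex; the edges among the old vertices are untouched, so $G_{t-1}$ remains an induced subgraph of $G_t$. A single bit gives \textsf{B} at most two distinct rules for choosing $S$, and \textquotedblleft full freedom of interpreting instructions\textquotedblright{} only fixes the dictionary from bit values to rules, not the number of rules.

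The heart of the argument, and the step I expect to be the main obstacle, is to show that each admissible rule must treat all existing vertices identically, so that $S$ is forced to be either $\emptyset$ or $\{1,\dots,t-1\}$. Here I would lean on the standing assumptions that \textsf{B} has no memory and no computational power (in particular cannot count). To single out a proper, nonempty $S$, \textsf{B} would have to distinguish at least one old vertex from another; but any such distinction would require either recalling when or how a vertex was inserted (memory) or inspecting its current degree or neighbourhood (computation), both of which are unavailable. Hence \textsf{B} cannot prefer one old vertex over another, and the only choices consistent with this indistinguishability are \textquotedblleft join none\textquotedblright{} and \textquotedblleft join all\textquotedblright. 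The delicate point is to turn the informal phrases \textquotedblleft no memory\textquotedblright{} and \textquotedblleft \textsf{B} cannot count\textquotedblright{} into a precise invariance statement; the cleanest formulation is that \textsf{B}'s decision for each old vertex cannot depend on the identity of that vertex, which rules out even \textquotedblleft local\textquotedblright{} criteria such as \textquotedblleft join every vertex that already has an edge.\textquotedblright

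Finally I would identify the two forced operations with the definition: $S=\emptyset$ is precisely the addition of an isolated vertex, and $S=\{1,\dots,t-1\}$ is precisely the addition of a dominating vertex. Since $G_{t-1}$ is threshold by hypothesis and the class of threshold graphs is by definition closed under both operations, $G_t$ is threshold, completing the induction. As an alternative to invoking closure under the constructive definition, one could instead appeal to the forbidden-subgraph characterization and check directly that neither operation can create an induced $P_4$, $C_4$, or $K_2 \uplus K_2$; but the constructive definition makes the conclusion immediate, so that is the route I would take.
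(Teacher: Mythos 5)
Your proposal is correct and follows essentially the same route as the paper: the crux in both is that the absence of memory and randomness makes the vertices of $G_{t-1}$ indistinguishable to \textsf{B}, forcing every step to add either an isolated or a dominating vertex. The paper packages the conclusion as an enumeration of the three possible bit-to-action dictionaries (yielding $E_t$, $\mathcal{T}$, and $K_t$) rather than as your explicit induction invoking closure of $\mathcal{T}$ under the two operations, but that is a difference of presentation, not of substance.
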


\begin{proof}
Let us consider the simplest possible case, which is, arguably, one bit of instruction per
vertex. \textsf{B} can not distinguish vertices of $G_{t-1}$, by reading their
labels, since \textsf{B} has no memory. \textsf{B} can not distinguish vertices
of $G_{t-1}$, by picking them at random, since \textsf{B} has no randomness. Hence, \textsf{B}, without further information, is restricted to either placing a dominating or isolated vertex for each instruction. 
We list in a table all unique interpretations of a one bit instruction from A that are compatible with these conditions: 

\begin{equation*}
\begin{tabular}{c|c}
$\left( 0\rightarrow \emptyset \right) ,\left( 1\rightarrow \emptyset
\right) $ & $(0\rightarrow E),\left( 1\rightarrow \emptyset \right) $ \\ 
\hline
& $(0\rightarrow E),\left( 1\rightarrow E\right) $%

\end{tabular}%
\ \ \ \ \ \text{ }.
\end{equation*}%

The notation is straightforward:

\begin{itemize}
\item ``$0\rightarrow E$" or ``$1\rightarrow E$" means ``construct $\{t,i\}\in E(G_{t})$ for every $%
i\in V(G_{t-1})$ when receiving instruction $0$ or $1$ at time $t$", that is, $t$ is a dominating vertex;

\item ``$0\rightarrow \emptyset $" or ``$1\rightarrow \emptyset $" means ``construct $\{t,i\}\notin E(G_{t})$
for every $i\in V(G_{t-1})$ when receiving instruction $0$ or $1$ at time $t$", that is, $t$ is an isolated vertex.

\end{itemize}

The cases $\left( 0\rightarrow \emptyset \right) ,\left(
1\rightarrow \emptyset \right) $ and $(0\rightarrow E),\left( 1\rightarrow
E\right) $ give $G_{t}=E_{t}$ and $G_{t}=K_{t}$, respectively. These are
both threshold graphs. It follows directly from the definition of a
threshold graph that the case $(0\rightarrow E),\left( 1\rightarrow
\emptyset \right) $ gives a threshold graph, since $(0\rightarrow E)$
introduces a dominating vertex and $\left( 1\rightarrow \emptyset \right) $
introduces an isolated vertex. It is clear that instructions of
arbitrary length will not change the above table, because the number of
possible interpretations of the instructions does not increase. In summary, the families of graphs constructed by the given instructions are:
\begin{equation*}
\begin{tabular}{c|c}
$\left( 0\rightarrow \emptyset \right) ,\left( 1\rightarrow \emptyset
\right) $ & $(0\rightarrow E),\left( 1\rightarrow \emptyset \right) $ \\ 
$E_{l}\uplus E_{m}$ & $\mathcal{T}$ \\ \hline
& $(0\rightarrow E),\left( 1\rightarrow E\right) $ \\ 
& $K_{l+m}$%
\end{tabular}%
\ \ \ \ \ \text{ }.
\end{equation*}%
We specify with $l$ and $m$ the number of vertices associated with
instruction $0$ and $1$ respectively.
\end{proof}

\bigskip 

\section{One bit of instruction and one bit of memory}\label{sec:oneb}

Let us now consider families of graphs that \textsf{B} can construct with the use of an extra resource: memory. The construction method is as before. One bit instructions are sent from \textsf{A} to \textsf{B}. For each instruction, \textsf{B} adds a vertex, and then carries out the instruction specified by said vertex. Unlike in the previous case, where all vertices were considered the same, the addition of one bit of memory  allows \textsf{B} to label each vertex with the instruction 0 or 1, as it is placed. Hence \textsf{B} can differentiate between two different groups of vertices. \textsf{A} is now able to send instructions which are directed at one particular group of vertices, and this increases the number of graphs that can be constructed.  For consistency, we assert that no other edge can be added to $G_{t-1}$ at
time $t$, meaning that, at time $t^{\prime }\geq t$, \textsf{B} is not allowed to
add an edge of the form $\{i,j\}$ with both $i$ and $j$ in $G_{t-1}$ (The effects of removing this restriction will be studied in section 3). 
In other words: for all $t' < t$, $G_{t'}$ is an induced subgraph of $G_t$ on the vertices $\{1, \ldots, t'\}$.
 
For the following analysis, we must define the standard join operation. Given graphs $G$ and $H$, their \emph{join}, $G+H$, is the graph constructed by taking the disjoint union of $G$ and $H$, then connecting every vertex of $G$ with every vertex of $H$.
For example, $E_m + E_n = K_{m,n}$ and $K_m + K_n = K_{m+n}$.

Define the labelling function, $\ell: V(G) \to \{0, 1\}$ that returns the label of a given vertex.
Unless stated otherwise, we will assume the bit string of instructions $x$ sent by \textsf{A}, has $l$ vertices labelled `0' and $m$ vertices labelled `1', that is $\abs{\{i \in V(G_t)\ | \ \ell(i) = 0 \}} = l$ and $\abs{\{i \in V(G_t)\ | \ \ell(i) = 1 \}} = m$.

We also define the graphs $E_{(x)}$, $K_{(x)}$ and $\widetilde{K}_{(x)}$.
First, $x = x_1x_2\ldots x_n \in \{0,1\}^n$ is a bitstring of length $n$. 
$E_{(x)}$ is the complete bipartite graph $K_{l,m}$, where the bipartition classes are vertices labelled `0' and `1', minus edges of the form $\{t, t'\}$ when $x_t=0$, $x_{t'} = 1$ and $t < t'$.
Similarly, $K_{(x) }$ is the complete split graph $K_{l}+ E_{m}$  minus edges of the form $\{t, t'\}$ when $x_t=0$, $x_{t'} = 1$ and $t>t'$.
The graph $\widetilde{K}_{(x)}$ is the graph $K_{l} \uplus K_{m}$ plus edges of the form $\{t, t'\}$ when $x_t=0$, $x_{t'} = 1$ and $t < t'$.
Our analysis results in the following proposition:

\begin{proposition}\label{oneb}
Let \textsf{A} send to \textsf{B} one bit of instruction at
each time step $t$. Let us assume that \textsf{B} has no randomness.
Moreover, let us assume that \textsf{B} can assign one bit of memory for each vertex. Then, $G_{t}$ is
either a threshold graph or one of the following graphs (not necessarily
threshold): 
\begin{equation}
E_{l}\uplus E_{m}, \  E_{(x)}, \  K_{l}\uplus E_{m}, \  K_{l,m},K_{(x)}, \  K_{l}+
E_{m}, \  K_{l}\uplus K_{m}, \  \widetilde{K}_{(x)}, \  K_{l+m}.
\label{onebg}
\end{equation}
\end{proposition}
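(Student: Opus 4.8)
The plan is to reduce the question to a finite case analysis over all admissible interpretations of the one-bit instruction, and then to match each resulting graph against the nine families in \eqref{onebg}. First I would pin down exactly what \textsf{B} can do at time $t$. Since \textsf{B} now labels vertex $t$ with its instruction bit $x_t$ but still has no randomness and no way to tell two equally-labelled vertices apart, the set of neighbours \textsf{B} assigns to $t$ among $V(G_{t-1})$ must be definable purely in terms of the labelling function $\ell$. The only such label-definable subsets are $\emptyset$, the $0$-labelled vertices, the $1$-labelled vertices, and all of $V(G_{t-1})$. Hence an interpretation is nothing more than a pair $(r_0,r_1)$ of connection rules drawn from these four options, one for each instruction value, giving $4\times 4 = 16$ interpretations in total. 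The structural constraint from the setup---that no edge between two old vertices is ever added, so each $G_{t'}$ is an induced subgraph of $G_t$---is automatically respected, since edges are only ever drawn from the freshly added vertex.

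Next I would decompose any graph produced this way into three independent pieces: the subgraph induced on the $l$ vertices labelled $0$, the subgraph induced on the $m$ vertices labelled $1$, and the bipartite graph of edges between the two classes. The rule $r_0$ controls two independent bits of information---whether a new $0$-vertex attaches to earlier $0$-vertices (making the $0$-class a clique $K_l$ rather than an independent set $E_l$) and whether it attaches to earlier $1$-vertices---and $r_1$ controls the two symmetric bits for the $1$-class. Writing $a,b,c,d\in\{0,1\}$ for these four choices, one checks that as $r_0$ and $r_1$ range over the four options, $(a,c)$ and $(b,d)$ independently range over $\{0,1\}^2$, recovering the $16$ cases. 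The $0$-class is $K_l$ iff $c=1$ and the $1$-class is $K_m$ iff $d=1$; an edge joins the $0$-vertex $t$ to the $1$-vertex $t'$ exactly when $t>t'$ and $a=1$, or $t'>t$ and $b=1$.

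The third step is the case analysis on the bipartite part. When $(a,b)=(0,0)$ the two classes are disconnected, giving a disjoint union; when $(a,b)=(1,1)$ every cross pair is joined, giving a join; and when $(a,b)$ is $(1,0)$ or $(0,1)$ the cross edges depend on arrival order, producing precisely the order-dependent ``staircase'' patterns in the definitions of $E_{(x)}$, $K_{(x)}$ and $\widetilde{K}_{(x)}$. Crossing these four bipartite possibilities with the four clique/independent choices for $(c,d)$ and using the label-swap symmetry $l\leftrightarrow m$ (which identifies, for instance, $E_l\uplus K_m$ with $K_l\uplus E_m$) collapses the $16$ cases onto exactly the nine families listed in \eqref{onebg}. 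Concretely, $(c,d)=(0,0)$ yields $E_l\uplus E_m$, $K_{l,m}$ or $E_{(x)}$; the cases $(c,d)\in\{(1,0),(0,1)\}$ yield $K_l\uplus E_m$, $K_l+E_m$ or $K_{(x)}$; and $(c,d)=(1,1)$ yields $K_l\uplus K_m$, $K_{l+m}$ or $\widetilde{K}_{(x)}$.

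I expect the only real obstacle to be the bookkeeping in the order-dependent cases: one must check that the condition ``edge iff $t>t'$'' coming from $a=1$ matches the minus-sign convention in the definition of $E_{(x)}$ (edges removed when $x_t=0$, $x_{t'}=1$, $t<t'$), and similarly for the $t>t'$ convention in $K_{(x)}$ and the plus-sign convention in $\widetilde{K}_{(x)}$. The two sub-cases $(a,b)=(1,0)$ and $(a,b)=(0,1)$ give mirror-image patterns, and I would argue they lie in the same family by observing that reversing the arrival order sends $x$ to its reversal $x^{R}$; since $x$ ranges over all bitstrings, so does $x^{R}$, and the family $\{E_{(x)}\}$ (respectively $\{K_{(x)}\}$, $\{\widetilde{K}_{(x)}\}$) is unchanged. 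Once this orientation bookkeeping is settled, the remaining verifications are routine and the proposition follows.
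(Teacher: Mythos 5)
Your proposal is correct, and at bottom it performs the same finite case analysis as the paper: both arguments start from the observation that, with one bit of memory per vertex and no randomness, the neighbourhood of a new vertex must be one of the four label-definable sets ($\emptyset$, the $0$-class, the $1$-class, or all of $V(G_{t-1})$), and then enumerate the rule pairs. The difference lies in how the enumeration is organised, and it is substantive enough to note. The paper collapses the $16$ ordered pairs to $10$ unordered ones using only the label-swap symmetry, then identifies the output of each rule one at a time in a table. You instead factor each rule into four independent bits (join own class? join other class? for each label), decompose the output graph into the two class-induced subgraphs plus the bipartite part between them, and run a $3\times 3$ product analysis, invoking a second symmetry the paper never uses: reversal of arrival order, which sends the instruction string $x$ to its reversal $x^{R}$ and exchanges the two staircase orientations. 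This makes exhaustiveness transparent ($16=4\times 4$) and exposes a fact the paper does not record: the threshold family $\mathcal{T}$ arising from $(0\rightarrow E),(1\rightarrow\emptyset)$ coincides, as a family, with $\{K_{(x)}\}$ arising from $(0\rightarrow 0),(1\rightarrow 0)$, which the paper lists as two distinct outcomes. Your reversal argument also quietly repairs a convention mismatch in the paper itself: $\widetilde{K}_{(x)}$ is defined with cross edges added when $x_t=0$, $x_{t'}=1$ and $t<t'$, whereas the rule $(0\rightarrow E),(1\rightarrow 1)$ actually produces cross edges with $t>t'$; the two agree only as families, precisely via your $x\mapsto x^{R}$ identification. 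What the paper's flat table buys in return is an explicit rule-to-family dictionary, which later sections (the memory-modifiable corollary, the fading-memory proposition, and the $P_n$- and $C_n$-freeness corollary) cite rule by rule; your grouped presentation would need to be unpacked again for those arguments.
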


\begin{proof}
\textsf{B} can distinguish vertices of $G_{t}$ by reading their labels, since 
\textsf{B} has one bit of memory for each vertex. However, the amount of
memory is only sufficient to divide $V(G_{t})$ into two classes: 0 or 1. \textsf{B}
cannot distinguish vertices of $G_{t}$, by picking them at random, since 
\textsf{B} has no randomness. As in the proof of Proposition \ref{threshold}%
, we list in a table all unique interpretations of a one bit instruction from A that are
compatible with these conditions: 
\begin{equation*}
\begin{tabular}{ccc|c}
$\left( 0\rightarrow\emptyset\right) ,\left( 1\rightarrow\emptyset\right) $
& \multicolumn{1}{|c}{$(0\rightarrow1),\left( 1\rightarrow\emptyset\right) $}
& \multicolumn{1}{|c|}{$(0\rightarrow0),\left( 1\rightarrow\emptyset \right) 
$} & $(0\rightarrow E),\left( 1\rightarrow\emptyset\right) $ \\ \hline
& \multicolumn{1}{|c}{$(0\rightarrow1),(1\rightarrow0)$} & 
\multicolumn{1}{|c|}{$(0\rightarrow0),(1\rightarrow0)$} & $(0\rightarrow
E),(1\rightarrow0)$ \\ \cline{2-3}\cline{2-4}
&  & \multicolumn{1}{|c|}{$(0\rightarrow0),\left( 1\rightarrow1\right) $} & $%
(0\rightarrow E),\left( 1\rightarrow1\right) $ \\ \cline{3-4}
&  &  & $(0\rightarrow E),\left( 1\rightarrow E\right) $%
\end{tabular}
\ \ \text{ }.
\end{equation*}
Here, the notation is as follows: 

\begin{itemize}
\item ``$0\rightarrow 1$" or ``$1\rightarrow 1$" means ``construct $\{t,i\}\in E(G_{t})$ for every $%
i\in V(G_{t-1})$ if $\ell(i) = 1$ when receiving instruction $0$ or $1$ at time $t$";

\item ``$0\rightarrow 0 $" or ``$1\rightarrow 0 $" means ``construct $\{t,i\}\in E(G_{t})$
for every $i\in V(G_{t-1})$ if $\ell(i) = 0$ when receiving instruction $0$ or $1$ at time $t$".

\end{itemize}

 We proceed with a case by case analysis:

\begin{description}
\item[$\left( 0\rightarrow\emptyset\right),  \left( 1\rightarrow
\emptyset\right) $:] $G_{t}=E_{l}\uplus E_{m}$ and $G_{t}\in\mathcal{T}$.

\item[$(0\rightarrow 0), \left( 1\rightarrow\emptyset\right) $:] $%
G_{t}=K_{l}\uplus E_{m}$ and $G_{t}\in\mathcal{T}$.

\item[$(0\rightarrow E), \left( 1\rightarrow\emptyset\right) $:] we obtain $%
\mathcal{T}$, since this is the case of Proposition \ref{threshold}.

\item[$(0\rightarrow 0), \left( 1\rightarrow 1\right) $:] $G_{t}=K_{l}\uplus
K_{m}$ and $G_{t}\notin\mathcal{T}$ if $l,m\geq2$.

\item[$(0\rightarrow E), \left( 1\rightarrow E\right) $:] $G_{t}=K_{l+m}$ and $%
G_{t}\in\mathcal{T}$

\item[$(0\rightarrow 1), \left( 1\rightarrow\emptyset\right) $:] $G_t = E_{(x)}$. From the instruction set, we have the edges $\{i, j\}$ if and only if $\ell(i)=0$ and $\ell(j)=1$, $j < i$.
This is the definition of $E_{(x)}$.

\item[$(0\rightarrow 1), (1\rightarrow 0)$:] $G_t = K_{l,m}$. Note that $G_t \not\in \mathcal{T}$ for $t \geq 2$.

\item[$(0\rightarrow 0), (1\rightarrow 0)$:] $G_{t} = K_{(x)}$. The instructions enforce that we have edges $\{i, j\}$ if and only if either: $\ell(i) = \ell(j) = 0$ for $i \neq j$, or $\ell(i) = 0$, $\ell(j) = 1$ for $j > i$.
This is readily seen to be $K_{(x)}$. 

\item[$(0\rightarrow E), (1\rightarrow 0)$:] $G_{t} = K_{l} + E_{m}$. The instructions give us all edges of the form $\{i, j\}$ when either: $\ell(i) = \ell(j) = 0$ for $i \neq j$ or $\ell(i) =0$, $\ell(j) = 1$.
These edges realise the complete split graph $K_{l} + E_{m}$.

\item[$(0\rightarrow E), \left( 1\rightarrow 1\right) $:] $G_t = \widetilde{K}_{(x)}$.
From the instruction set, we have the edges $\{i, j\}$ if and only if either: $\ell(i)= \ell(j)$, or $\ell(i) = 0$, $\ell(j) = 1$, $j < i$.
This graph is $\widetilde{K}_{(x)}$.

\end{description}

The table summarises the families: 
\begin{equation*}
\begin{tabular}{ccc|c}
$\left( 0\rightarrow\emptyset\right) \left( 1\rightarrow\emptyset\right) $ & 
\multicolumn{1}{|c}{$(0\rightarrow1)\left( 1\rightarrow\emptyset\right) $} & 
\multicolumn{1}{|c|}{$(0\rightarrow0)\left( 1\rightarrow\emptyset\right) $}
& $(0\rightarrow E)\left( 1\rightarrow\emptyset\right) $ \\ 
$E_{l}\uplus E_{m}$ & \multicolumn{1}{|c}{$E_{(x)}$} & \multicolumn{1}{|c|}{$%
K_{l}\uplus E_{m}$} & $\mathcal{T}$ \\ \hline
& \multicolumn{1}{|c}{$(0\rightarrow1)(1\rightarrow0)$} & 
\multicolumn{1}{|c|}{$(0\rightarrow0)(1\rightarrow0)$} & $(0\rightarrow
E)(1\rightarrow0)$ \\ 
& \multicolumn{1}{|c}{$K_{l,m}$} & \multicolumn{1}{|c|}{$K_{(x)}$} & $%
K_{l}+ E_{m}$ \\ \cline{2-4}
&  & \multicolumn{1}{|c|}{$(0\rightarrow0)\left( 1\rightarrow1\right) $} & $%
(0\rightarrow E)\left( 1\rightarrow1\right) $ \\ 
&  & \multicolumn{1}{|c|}{$K_{l}\uplus K_{m}$} & $\widetilde{K}_{(x)}$ \\ \cline{3-4}
&  &  & $(0\rightarrow E)\left( 1\rightarrow E\right) $ \\ 
&  &  & $K_{l+m}$%
\end{tabular}
\ \ \ \ .
\end{equation*}
\end{proof}

\bigskip

\section{Using memory to modify $G_{t-1}$} 
In earlier sections, \textsf{B} had the restriction that no other edge could be added to $G_{t-1}$ at time $t$ -- meaning that, at time $t^{\prime }\geq t$, \textsf{B} was not allowed to add an edge of the form $\{i,j\}$ with both $i$ and $j$ in $G_{t-1}$. We can say that graphs produced in this way have the property that $G_{t-1}$ is always an induced subgraph of $G_{t}$. In this section, we relax this restriction, meaning that \textsf{B}, using memory, is now able to construct edges within $G_{t-1}$. We will refer to graphs constructed in this manner as being \textit{memory modifiable}. 

To further illustrate this, let us consider the \textit{memory modifiable} graph $M_{00010}$ constructed from the instructions $(0\rightarrow 1)(1\rightarrow \emptyset)$. As \textsf{B} receives the first four bits from \textsf{A}, an empty graph $E_{4}$ is constructed. This corresponds to $M_{t-1}$. As \textsf{B} has memory, the labels $0$ or $1$ of each of the four vertices can be distinguished.
Hence, when \textsf{B} receives the last instruction bit, which corresponds to joining vertex $0_{5}$ to $1$ vertices, \textsf{B} can instead choose to construct the edges $(0_{1},1_{4})$,$(0_{2},1_{4})$,$(0_{3},1_{4})$, $(0_{5},1_{4})$, as all vertices labeled $0$ are indistinguishable without the above restriction.
The resultant graph $M_{t}$ is a complete bipartite graph, $K_{1,4}$. We notice that if vertex $0_{5}$ was removed, then the remaining induced subgraph is $K_{1,3}$. This is clearly not $E_{4}$. We can now give a definition of a memory modifiable graph: A memory modifiable graph, $M_{t}$, is a graph with the property that its induced subgraph constructed by the removal of vertex $t$ is not isomorphic to $M_{t-1}$.
If \textsf{B} had chosen not to use memory to add edges in $M_{t-1}$, then we construct the equivalent graph listed in Proposition~\ref{oneb}, which for $(0\rightarrow 1)(1\rightarrow \emptyset)$ is the closely related $E_{(00010)}$ (see Proposition~\ref{oneb} and Figure~\ref{fig:ME_fig}).  We can conjecture that for each instruction there exists a memory modifiable graph. Given that the construction method is different the natural question to ask is the following: is the family of memory modifiable graphs the same as the graphs listed in Proposition~\ref{oneb}? 

\bigskip

\begin{figure}[hbtp]
    \centering
    \includegraphics[width=0.4\textwidth]{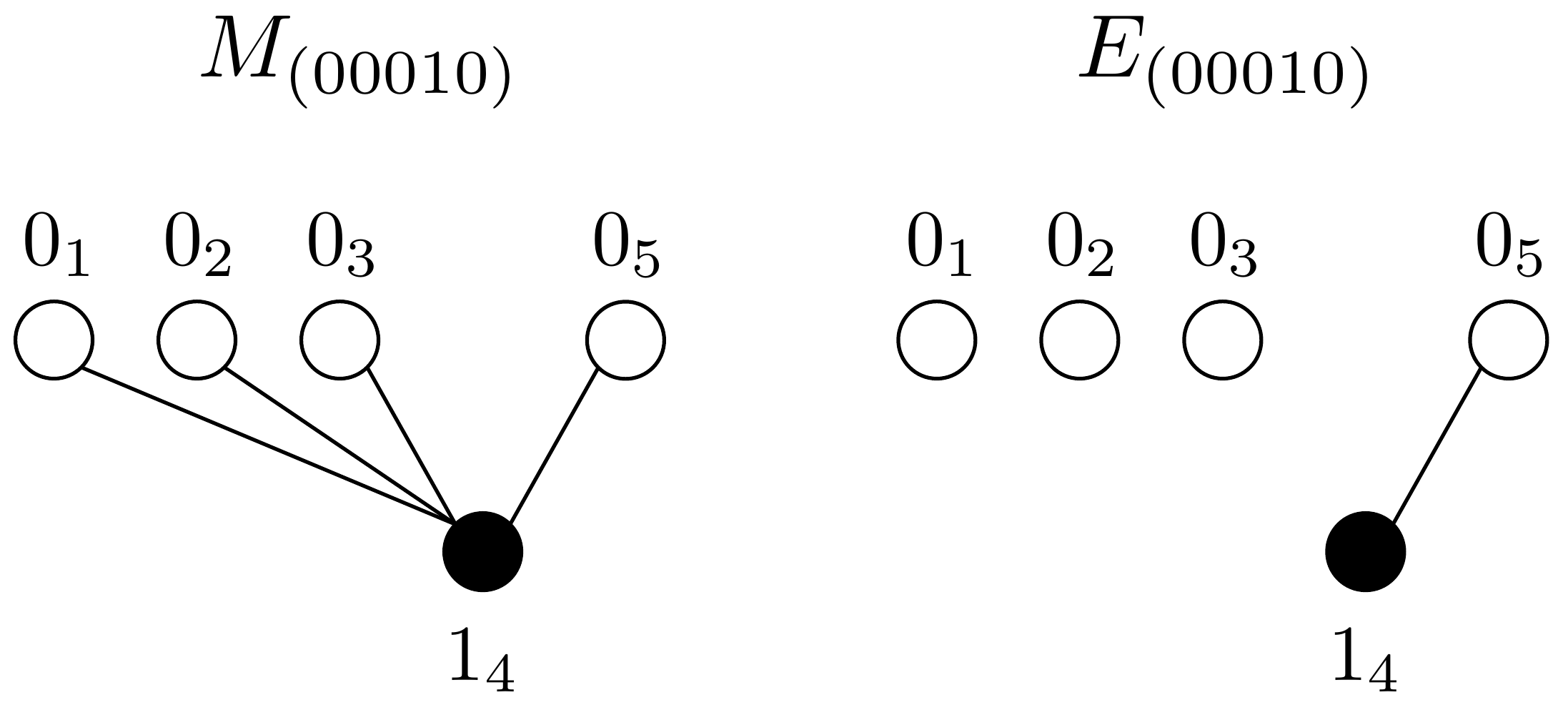}
    \caption{The graphs $M_{(00010)}$ and $E_{(00010)}$, as defined in the main text.}
    \label{fig:ME_fig}
\end{figure}

\begin{corollary}
All possible memory modifiable graphs, $M_{t}$, are of the form: 

\begin{equation}
K_{l} + E_{m} ,\  K_{l,m} ,\ K_{l+m}.
\end{equation}
which are all graphs proven to be constructible in Proposition~\ref{oneb}. 

\end{corollary}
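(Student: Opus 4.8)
The plan is to determine which graphs admit a construction whose final step genuinely rewrites $G_{t-1}$, and the first move is to show that every such graph lies on a short list of ``symmetric'' graphs. The memory-modifiable condition $M_t-\{t\}\not\cong M_{t-1}$ forces \textsf{B}, at time $t$, to draw an edge between two vertices already present in $G_{t-1}$. \textsf{B} can only do this through the indistinguishability mechanism of the worked example: carrying out a connecting instruction for the new vertex, \textsf{B} cannot single it out from the other vertices sharing its label, so it joins the whole class to its target. Because vertex $t$ is placed last, every vertex of the target class is already present, so this step leaves the two classes \emph{completely} joined. Combined with the within-class edges---which, as noted below, form either a clique or an independent set---the finished graph is invariant under independently permuting the $l$ label-$0$ and $m$ label-$1$ vertices, i.e.\ under $S_l\times S_m$. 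There are exactly eight such graphs, six up to isomorphism: $E_l\uplus E_m$, $K_l\uplus E_m\ (\cong E_l\uplus K_m)$, $K_l\uplus K_m$, $K_{l,m}$, $K_l+E_m\ (\cong E_l+K_m)$ and $K_{l+m}$.

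Next I would discard the four graphs with no edges between the classes. In each of these every edge lies within a single class, and a within-class edge is produced only by an instruction joining a freshly placed vertex to the existing members of its own class. Since this instruction fires at every arrival of a vertex of that class, the class is already a clique before the final vertex appears: the $k$-th vertex of the class is joined to the $k-1$ earlier ones, so no within-class edge is ever left pending. The final placement therefore only extends one class-clique by a single vertex, leaving the other class and the (empty) cross-edges unchanged, so deleting vertex $t$ restores $M_{t-1}$ exactly. The empty graph is the trivial sub-case. Hence none of $E_l\uplus E_m$, $K_l\uplus E_m$, $K_l\uplus K_m$ is memory-modifiable.

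Finally I would exhibit memory-modifiable constructions for the three join-type graphs. A connecting instruction links the new vertex only to vertices already present, so a vertex placed before the class it should join has its cross-edges left pending in the plain induced-subgraph construction. Ordering the vertices with a label-$0$ vertex last, I would take $(0\to1)(1\to\emptyset)$ for $K_{l,m}$, $(0\to E)(1\to\emptyset)$ for $K_l+E_m$, and $(0\to E)(1\to1)$ for $K_{l+m}$: in each case the earlier label-$0$ vertices begin with their edges to the label-$1$ vertices missing, and at the last step \textsf{B} completes all of them at once, precisely as $E_{(00010)}$ was completed to $K_{1,4}$. These edges join pairs already in $G_{t-1}$, so $M_t-\{t\}$ keeps them while $M_{t-1}$ lacks them, giving $M_t-\{t\}\not\cong M_{t-1}$. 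As $K_{l,m}$, $K_l+E_m$ and $K_{l+m}$ all occur in the list of Proposition~\ref{oneb}, the corollary follows. I expect the exclusion step to be the crux: the whole result hinges on the fact that within-class cliques must be built incrementally, so that \textsf{B} can never quietly postpone same-class edges and complete them at the end---this is exactly what removes the disjoint-union graphs and leaves only the three join-type families.
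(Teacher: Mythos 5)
Your proof is correct, and it reaches the corollary by a genuinely different route than the paper. The paper's proof is an exhaustive enumeration: it walks through all ten instruction interpretations from Proposition~\ref{oneb}, argues case by case that six of them can never realise a last-step modification (no edges, already complete, non-interacting classes, or all cross-class edges already present), and reads off the graphs $K_l+E_m$, $K_{l,m}$, $K_{l+m}$ from the four instruction sets that can. You instead characterise the \emph{target graphs} first: a genuine last-step modification must complete the join between the two label classes, and within-class structure is always a clique or an independent set (being built incrementally at each arrival), so any memory modifiable graph is $S_l\times S_m$-invariant; you then discard the three disjoint-union block graphs and exhibit witnesses for the three join-type graphs --- in fact the same three instruction sets the paper identifies, namely $(0\rightarrow 1)(1\rightarrow\emptyset)$, $(0\rightarrow E)(1\rightarrow\emptyset)$ and $(0\rightarrow E)(1\rightarrow 1)$. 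What the paper's enumeration buys is the complete bookkeeping of which instruction sets admit modification (it also finds the fourth, $(0\rightarrow 0)(1\rightarrow 0)$, again yielding $K_l+E_m$); what your approach buys is brevity and a structural explanation of \emph{why} exactly the join-type families survive. One small point worth tightening: your claim that within-class edges ``are produced only by an instruction joining a freshly placed vertex to the existing members of its own class'' should acknowledge that a $(\cdot\rightarrow E)$ modification also creates within-class edges, but only redundantly, since under such an instruction that class is already a clique --- this is implicit in your argument and does not affect its validity.
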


\begin{proof}
Let us assume that $\textsf{B}$ still has access to one bit of memory and instruction per vertex. Let us also remove the restriction that edges cannot be constructed in $G_{t-1}$. Then, from the ten unique instructions identified, only four can produce memory modifiable graphs. For those four, the new graph types are listed. We find that none of these graphs are unique, as they can all be constructed by the method used in Proposition~\ref{oneb}. Below, we assume the following definition for a memory modifiable graph: A memory modifiable graph $M_{t}$ is a graph such that its induced subgraph formed by the removal of vertex $t$ is not isomorphic to $M_{t-1}$. We use this definition to determine whether an instruction can construct memory modifiable graphs. We also note that if a memory modifiable graph has not been formed for a particular instruction set at time  $t$, then the graph $G_{t}$ must be of the form listed in Proposition~\ref{oneb}.

\begin{description}
\item ($0\rightarrow \emptyset)$$(1\rightarrow \emptyset)$: Trivially, this instruction set cannot construct a memory modifiable graph as there are no edges. 

\item ($0\rightarrow E)$$(1\rightarrow E)$:
This instruction always forms a complete graph, $K_{t}$, so cannot form memory modifiable graphs.
  
\item ($0\rightarrow 0)$$(1\rightarrow 1)$: 
Consider $G_{t-1}$ on $t-1$ vertices. This is $K_{l} \uplus K_{m}$. Let us add vertex $t$. If vertex $t$ is 0, then we produce    $K_{l+1} \uplus K_{m}$. If vertex $t$ is 1, then we produce $K_{l} \uplus K_{m+1}$. Now, if vertex $t$ was removed, in either case, we are left with $K_{l} \uplus K_{m}$. Clearly, $G_{t-1}$ is the same as the induced subgraph formed when vertex $t$ is removed.
In fact, we can conclude the same about any set of instructions where the two groups of $0$ and $1$ vertices don't interact, because any graph formed from only one type of vertex, as shown in Proposition~\ref{prop2} can only form $K_{l}$ or $E_{l}$. Neither of these are memory modifiable, as shown above. 

\item ($0\rightarrow E)$$(1\rightarrow 0)$:
Consider $G_{t-1}$ on $t-1$ vertices. This is $K_{l} + E_{m}$. We note that all possible $(1\rightarrow 0)$ edges must exist for this graph type. Let us add vertex $t$. If vertex $t$ is 0, then we add a dominating vertex. If vertex $t$ is 1, then we connect all existing vertices labelled 1 to those labelled 0. However, we see that since all previous $(1\rightarrow 0)$ edges have already been constructed, only vertex $t$ forms connections. In both cases,  if vertex $t$ was removed, then the remaining induced subgraph would be $K_{l} + K_{m}$. Hence, this instruction can never produce memory modifiable graphs. 

\item ($0\rightarrow 0)$$(1\rightarrow \emptyset)$:
This set of instructions is non-interacting. Hence, we can see that this cannot form a memory modifiable graph. 

\item ($0\rightarrow 1)$$(1\rightarrow 0)$:
Consider $G_{t-1}$ on $t-1$ vertices. This is $K_{l,m}$. Let us add vertex $t$. If vertex $t$ is 0, then we construct $K_{l+1,m}$. If vertex $t$ is 1, then we construct $K_{l,m+1}$.
In both cases, vertex $t$ connects to all vertices of the opposite type. However, $t$ doesn't modify the edges of any other pair of vertices, as these have already been connected with previous instructions. If vertex $t$ was removed, in either case, we are left with the original $K_{l,m}$. The induced subgraph is the same as $G_{t-1}$, so this cannot form memory modifiable graphs.

\item ($0\rightarrow 0)$$(1\rightarrow 0)$:
$M_{t}$ = $K_{l} + E_{m}$, where $l$ and $m$  are the number of $0$'s and $1$'s respectively. This is in fact a complete split graph consisting of an independent empty set $E_{l}$ to which every vertex of clique $K_{m}$ is completely connected. This is same graph produced by $(0\rightarrow 1)$$(1\rightarrow E)$ as in Proposition~\ref{oneb}.  

\item ($0\rightarrow 1)$$(1\rightarrow \emptyset)$: 
$M_{t}$ = $K_{l,m}$, where $l$ and $m$  are the number of $0$'s and $1$'s respectively. This is same graph produced by $(0\rightarrow1)$$(1\rightarrow0)$ as in Proposition 3.
\item ($0\rightarrow E)$$(1\rightarrow 1)$: $M_{t}$ = $K_{l+m}$, where $l$ and $m$  are the number of $0$'s and $1$'s respectively. This is a complete graph and is also a threshold graph. This is same graph produced by $(0\rightarrow E)$$(1\rightarrow E)$ as in Proposition~\ref{oneb}.

\item ($0\rightarrow E)$$(1\rightarrow \emptyset)$:
 $M_{t}$ = $K_{l} + E_{m}$. This is a complete split graph, which is also threshold graph. This is same graph produced by$(0\rightarrow 1)$$(1\rightarrow E)$ as in Proposition~\ref{oneb}.  

\end{description}

\noindent Hence, any graph $G_{t}$ whose subgraph $G_{t-1}$ is altered with the addition of vertex $t$ is not unique, and does not add further to the number of graphs constructible by $B$. 
\end{proof}

Recall that a graph is $H$\emph{-free} if it does not contain a graph $H$ as
an induced subgraph. Let us consider $(0\rightarrow 1)\left( 1\rightarrow
\emptyset \right) $. The sequence of instructions $%
x_{5}=1_{1}0_{2}0_{3}1_{4}0_{5}$ gives the graph $G_{5}$ on five vertices $%
1_{1},0_{2},0_{3},1_{4},0_{5}$ and edges $\{1_{1},0_{2}\},\{1_{1},0_{3}\},%
\{1_{1},0_{5}\}$, and $\{1_{4},0_{5}\}$. The edges $\{1_{1},0_{3}\},%
\{1_{1},0_{5}\}$, and $\{1_{4},0_{5}\}$ form $P_{4}$. Hence, we can
construct $P_{4}$ in $G_{t}$, meaning that $G_{t}$ is not $P_{4}$%
-free. We can notice, by direct inspection on the graphs in Eq. (\ref{onebg}%
), that every time we attempt to construct $C_{k}$, with $k\geq 5$, we are
forced to include a chord. A \emph{chord} is an edge between two vertices of
a cycle that is not itself an edge of the cycle. Thus, $G_{t}$ is $C_{k}$%
-free for $k\geq 5$. Same situation for $P_{k}$, with $k\geq 5$. The case of 
$P_{5}$ is easy. Consider $P_{4}$ and try to add a vertex -- and an extra
edge -- for obtaining $P_{5}$. In taking $x_{5}=1_{1}0_{2}0_{3}1_{4}0_{5}$,
we can only add $0_{6}$ or $1_{6}$. If we add $0_{6}$, we then create a $4$%
-cycle $1_{1}0_{5}1_{4}0_{6}$. The vertex $1_{6}$ is isolated. There are $32$
binary strings of length $5$. The rule $(0\rightarrow 1)\left( 1\rightarrow
\emptyset \right) $ can not give $P_{5}$. The rules $\left( 0\rightarrow
\emptyset \right) \left( 1\rightarrow \emptyset \right) $ and $(0\rightarrow
1)(1\rightarrow 0)$ are clearly unsuitable for $P_{5}$. The same happens for 
$(0\rightarrow E)\left( 1\rightarrow \emptyset \right) $, $(0\rightarrow
0)\left( 1\rightarrow \emptyset \right) $, since threshold graphs are known
to be $P_{4}$-free. Further, $(0\rightarrow 0)\left( 1\rightarrow 1\right) $
gives only disjoint cliques. With $(0\rightarrow 0)(1\rightarrow 0)$, we
form a clique of $0s$ whenever we add $0$ and so we can not go beyond $P_{3}$%
, which is given by the sequence $0_{1}1_{2}1_{3}$. We always form a new
triangle or a pendant vertex attached to a clique when adding $1$ with the
rule $(0\rightarrow E)(1\rightarrow 0)$. For $(0\rightarrow E)\left(
1\rightarrow 1\right) $, cliques are unavoidable. Finally, $K_{l+m}$ is $%
P_{4}$-free. This discussion leads to a corollary of Proposition \ref{oneb}:

\begin{corollary}
Let \textsf{A} send to \textsf{B} one bit of instructions at each time step $%
t$. Let us assume that \textsf{B} has no randomness. Moreover, let us assume
that \textsf{B} remembers all instructions of \textsf{A}, \emph{i.e.}, one
bit of memory for each vertex. Then, $G_{t}$ is not necessarily $P_{4}$-free
or $C_{4}$-free, but it is $P_{n}$-free and $C_{n}$-free, for each $n\geq 5$.
\end{corollary}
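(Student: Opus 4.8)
The plan is to use Proposition~\ref{oneb} to turn the statement into a finite check: $G_t$ is either a threshold graph or one of the nine families in Eq.~(\ref{onebg}), so it suffices to establish the two freeness properties for each. I would first dispose of the non-freeness half by exhibiting explicit small witnesses, which the discussion preceding the corollary already supplies: the sequence $x_5 = 1_1 0_2 0_3 1_4 0_5$ under $(0\to 1)(1\to\emptyset)$ yields an induced $P_4$, and taking $l=m=2$ in the family $K_{l,m}$ produced by $(0\to1)(1\to0)$ gives $K_{2,2}=C_4$. Hence $G_t$ is in general neither $P_4$-free nor $C_4$-free.

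For the freeness half, rather than arguing rule by rule I would group the nine families into three hereditary classes and treat each uniformly. I claim that every graph in Eq.~(\ref{onebg}) is either (i) bipartite and $(K_2\uplus K_2)$-free, (ii) a split graph, or (iii) co-bipartite, meaning that its vertex set is covered by two cliques. The assignments are read off from the constructions, with the label-$0$ vertices and label-$1$ vertices playing the two special roles: $E_l\uplus E_m$, $E_{(x)}$ and $K_{l,m}$ lie in class (i) (for $E_{(x)}$ one checks from the definition that the neighbourhoods of the label-$0$ vertices are linearly ordered by inclusion, so there is no induced $K_2\uplus K_2$); $K_l\uplus E_m$, $K_{(x)}$, $K_l+E_m$ and $K_{l+m}$ are split, the label-$0$ vertices forming a clique and the label-$1$ vertices an independent set; and $K_l\uplus K_m$, $\widetilde{K}_{(x)}$ and $K_{l+m}$ are co-bipartite, the two covering cliques being the label-$0$ and label-$1$ vertices. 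Threshold graphs need no separate treatment: by the forbidden-subgraph characterization recalled earlier they are $P_4$-free, and $P_4$-freeness already forces $P_n$-freeness for $n\ge 4$ and $C_n$-freeness for $n\ge 5$, since every induced $P_n$ with $n\ge 4$ and every induced $C_n$ with $n\ge 5$ contains an induced $P_4$.

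It then remains to prove two short generic facts. First, if $G$ is $(K_2\uplus K_2)$-free it contains no induced $P_n$ for $n\ge 5$ and no induced $C_n$ for $n\ge 6$, because any induced $P_5$ (take its two end-edges) and any induced $C_n$ with $n\ge 6$ contains an induced $K_2\uplus K_2$, while any induced $P_n$ with $n\ge 5$ contains an induced $P_5$; the leftover case $C_5$ is excluded in class (i) since bipartite graphs have no odd cycle, and in class (ii) since $C_5$ is one of the forbidden induced subgraphs of split graphs. Second, for class (iii) a single counting argument handles both $P$ and $C$: an induced $P_n$ or $C_n$ with $n\ge 5$ has clique number $2$, so it cannot be covered by two cliques (these would cover at most $4$ vertices) and is therefore not co-bipartite; since being co-bipartite is hereditary, no such graph occurs as an induced subgraph. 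Combining the three classes gives that $G_t$ is $P_n$-free and $C_n$-free for every $n\ge 5$.

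The main obstacle is not a deep step but the bookkeeping of the second paragraph: one must parse the definitions of the three parametrized families $E_{(x)}$, $K_{(x)}$ and $\widetilde{K}_{(x)}$ carefully enough to be certain of the class to which each belongs, in particular that the cross-edges deleted in $E_{(x)}$, and those added in $\widetilde{K}_{(x)}$, never destroy $(K_2\uplus K_2)$-freeness and the two-clique cover respectively. Once the membership claims are pinned down, the two structural lemmas above are routine.
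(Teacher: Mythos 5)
Your proposal is correct, but it takes a genuinely different route from the paper. The paper's argument (given in the discussion that the corollary summarises) is dynamic and rule-by-rule: it fixes each of the ten instruction sets in turn and argues that any attempt to grow an induced $P_5$, or a chordless $C_k$ with $k\geq 5$, is frustrated — a chord is forced, a clique forms, or a $C_4$ appears — with the representative computation done for $(0\rightarrow 1)(1\rightarrow\emptyset)$ starting from the explicit $P_4$ witness $1_10_20_31_40_5$ (the same witness you reuse). Your proof instead works per output family rather than per rule: you partition the nine graphs of Eq.~(\ref{onebg}) into three hereditary classes — bipartite and $(K_2\uplus K_2)$-free, split, and co-bipartite — and then dispose of all of them at once with two generic forbidden-subgraph lemmas (a $(K_2\uplus K_2)$-free graph has no induced $P_n$ for $n\geq 5$ nor $C_n$ for $n\geq 6$, with $C_5$ killed by bipartiteness or splitness; a co-bipartite graph has no triangle-free induced subgraph on five or more vertices, by the two-clique counting argument). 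Your class-membership checks are sound: the nested-neighbourhood property of $E_{(x)}$ does give $(K_2\uplus K_2)$-freeness, $K_{(x)}$ remains split since only cross edges are deleted from $K_l+E_m$, and $\widetilde{K}_{(x)}$ keeps its two-clique cover since only cross edges are added to $K_l\uplus K_m$. What your approach buys is rigour and economy: hereditariness makes the freeness claims automatic and simultaneous for all $n\geq 5$, replacing the paper's informal "direct inspection" and "forced chord" assertions with checkable structural facts, at the modest cost of parsing the three parametrised families' definitions once. The paper's approach, by contrast, is more constructive — it shows concretely how each instruction set fails to extend a path or close a long cycle — but it is harder to verify exhaustively.
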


\section{One bit of instructions and one bit of memory for each vertex
(fading memory)}

In this section, we consider the graphs produced when $B$  has `fading' memory. That is, a memory which lasts for an integer \textsf{L} number of time steps. Previously, we considered the cases where \textsf{L} = 1 (no memory) and \textsf{L} = $t$ (perfect memory). Now we consider the graphs formed when \textsf{L} is a small integer. With such a time step, it is impossible to construct a memory modifiable graph.
For the following analysis, we define a linear forest - the disjoint union of path graphs $P_{r_{1}}$, $P_{r_{2}},\ldots,P_{r_{n}}$, each of size $r_{i}$ for $0<i<n$ - to be a graph of the form $\biguplus_{i=1}^{n}P_{r_{i}}$.
Note that an isolated vertex is the path $P_1$, so a linear forest can contain isolated vertices.

We also define the graphs $K'_{(x)}$ and $E'_{(x)}$, with respect to the instruction bit string $x$, as in Section~\ref{sec:oneb}.
The graph $K'_{(x)}$ is $K_{m} + E_{l}$ minus edges of the form $\{t, t'\}$ for $t'< t-1$ when $x_{t}=1$ and $x_{t'}=0$. 
The graph $E'_{x}$ is $\widetilde{K}_{(x)}$ minus edges of the form $\{t, t'\}$ for $t'< t-1$ when $x_{t}=1$ and $x_{t'}=1$.
Furthermore, we use the notation $\oplus$ to represent exclusive `or'.

\begin{proposition}

Let $A$ send to $B$ one bit of instruction at each time step $t$. Furthermore, let \textsf{B} have memory which lasts \textsf{L} = 2 number of time steps. Then, the families of graphs produced will either be a threshold graph, a linear forest, or one of the following graphs: 
 
\begin{equation}
 E'_{(x)} ,\ K'_{(x)} ,\ K_t ,\ E_t.
\label{onebg}
\end{equation}
  
\end{proposition}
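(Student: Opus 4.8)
The plan is to mirror the case analysis of Proposition~\ref{oneb}, re-evaluating each of the ten canonical one-bit interpretations under the constraint imposed by \textsf{L}$=2$ fading memory. First I would pin down the operational meaning of this resource: when \textsf{B} places vertex $t$, the only labels it can still read are those of $t$ itself and of the immediately preceding vertex $t-1$, since the labels of $1,\dots,t-2$ have faded. Consequently the two ``global'' actions $\to E$ (join to every earlier vertex) and $\to\emptyset$ (join to none) are untouched, as they need no label information, whereas the ``selective'' actions $\to 0$ and $\to 1$ can now be applied only to the single remembered vertex $t-1$. Because memory survives just two steps, no edge of $G_{t-1}$ can ever be rewritten, so $G_{t-1}$ stays an induced subgraph of $G_t$ and the memory-modifiable behaviour of the previous section cannot occur.

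With this dictionary I would sort the ten interpretations into three groups. (i) Both actions lie in $\{E,\emptyset\}$: these are memory-free, so they reproduce exactly the graphs of Proposition~\ref{threshold}, namely $E_t$ from $(0\to\emptyset)(1\to\emptyset)$, $K_t$ from $(0\to E)(1\to E)$, and a threshold graph from $(0\to E)(1\to\emptyset)$. (ii) Neither action is $\to E$: here the only edge \textsf{B} can ever draw when placing $t$ joins $t$ to $t-1$, so every $G_t$ is a spanning subgraph of the path $1\text{--}2\text{--}\cdots\text{--}t$ and is therefore a linear forest. For these five cases I would record the Boolean rule governing the consecutive pair $\{t-1,t\}$: for $(0\to0)(1\to0)$ the edge is present iff $x_{t-1}=0$; for $(0\to1)(1\to0)$ iff $x_{t-1}\oplus x_t=1$ and for $(0\to0)(1\to1)$ iff $x_{t-1}\oplus x_t=0$; and the two $\to\emptyset$ variants yield matchings or unions of $0$-runs. (iii) Exactly one action is $\to E$ and the other selective, i.e. $(0\to E)(1\to0)$ and $(0\to E)(1\to1)$: these are the only interpretations yielding genuinely new graphs, and I claim they produce $K'_{(x)}$ and $E'_{(x)}$ respectively.

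The heart of the argument, and the step I expect to be the main obstacle, is group (iii), where the careful edge-by-edge bookkeeping happens. The $0\to E$ half needs no memory, so the $0$-vertices behave exactly as in Proposition~\ref{oneb}, forming a clique and attaching to every earlier vertex; this recovers the complete split skeleton produced by $(0\to E)(1\to0)$ and the skeleton $\widetilde{K}_{(x)}$ produced by $(0\to E)(1\to1)$. The only divergence from Proposition~\ref{oneb} comes from the selective half: a $1$-vertex at time $t$ can now realise its $1\to0$ (respectively $1\to1$) edge only to $t-1$, so every such edge to a non-adjacent earlier target $t'<t-1$ of the matching label is absent. I would then verify that deleting precisely these edges turns the two perfect-memory skeletons into $K'_{(x)}$ and $E'_{(x)}$, matching their definitions term for term. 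Having exhausted all ten interpretations, the proposition follows: the named families $E'_{(x)},K'_{(x)},K_t,E_t$ account for groups (iii) and (i), while $(0\to E)(1\to\emptyset)$ yields a threshold graph and the five remaining instructions yield linear forests.
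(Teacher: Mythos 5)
Your proposal is correct and takes essentially the same approach as the paper: it re-runs the ten-case analysis of Proposition~\ref{oneb} with the selective actions $\to 0$ and $\to 1$ restricted to the single remembered vertex $t-1$, obtaining $E_t$, $K_t$, threshold graphs, linear forests, and $K'_{(x)}$, $E'_{(x)}$ exactly as the paper does. Your only departures are organizational: grouping the ten interpretations into three classes and observing that the doubly-selective rules produce spanning subgraphs of the path $1\text{--}2\text{--}\cdots\text{--}t$ (hence linear forests), which is a slightly cleaner route to the same conclusion than the paper's explicit run-length descriptions $R(x)$, $S(x)$, $Q(x)$, $A(x)$.
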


\begin{proof} $B$ can distinguish only the labels of vertices $t$ and $t-1$. $B$ cannot distinguish vertices of $G_{t}$ by picking them at random, since $B$ has no randomness.
As before, $m$ and $l$ are the number of $0$'s and $1$'s respectively in the instruction string $x$.    
The interpretations of a one bit instruction from \textsf{A} are the same as in Proposition~\ref{oneb}. Given these constraints, we proceed with a case by case analysis. 

\begin{description}
\item[$\left( 0\rightarrow\emptyset\right) \left( 1\rightarrow
\emptyset\right) $:]  $G_{t}$ = $E_{t}$. 
\item[$(0\rightarrow E)\left( 1\rightarrow E\right) $:] 
 $G_{t}$ = $K_{t}$
\item[$(0\rightarrow E)\left( 1\rightarrow\emptyset\right) $:]
 $G_{t}$ = $\mathcal{T}$, again, as this is the case of Proposition~\ref{prop2}. 
\item[$(0\rightarrow0)\left( 1\rightarrow\emptyset\right) $:]
$G_{t}$ = $\biguplus_{r \in R(x) } P_{r}$ for $R: \{0, 1\}^* \to \mathbb{N}^*$.
The string $R(x)$ is the ordered sequence counting consecutive zeros that appear in a bit string $x$, for instance $R(00110100010) = (2,1,3,1)$.

\item[$(0\rightarrow0)\left( 1\rightarrow1\right) $:] 
$G_{t}$ = $\biguplus_{r \in R(x) } P_{r} \uplus \biguplus_{s \in S(x) } P_{s}$, for $R: \{0, 1\}^* \to \mathbb{N}^*$ and $S: \{0, 1\}^* \to \mathbb{N}^*$.
The strings $R(x)$ and $S(x)$ are the ordered sequences counting consecutive zeros and ones respectively that appear in a bit string $x$, \emph{e.g.} $R(00110100010) = (2,1,3,1)$, $S(00110100010) = (2,1,1)$.

\item[$(0\rightarrow1)\left( 1\rightarrow\emptyset\right) $:] 
$G_{t}$ = $\biguplus^r_{i=1} P_2 \uplus E_{t-2r}$, where $r$ is the number of occurrences of the substring $(10)$ in the instruction string $x$. 

\item[$(0\rightarrow1)(1\rightarrow0)$:] 
$G_{t}$ = $\biguplus_{q \in Q(x)} P_q \uplus E_{t - \sum_{q\in Q(x)} q}$, for  
$Q: \{0, 1\}^* \to \mathbb{N}^*$.
The string $Q(x)$ is the ordered sequence of the lengths of substrings of alternating bits appearing in a bit string $x$, for instance $Q(00110100010) = (2,4,3)$, counting bits $(x_2x_3)$, $(x_4x_5x_6x_7)$ and $(x_9x_{10}x_{11})$.

\item[$(0\rightarrow0)(1\rightarrow0)$:] 
$G_{t}$ = $\biguplus_{a\in A(x)} P_{a} \uplus E_{t - \sum_{a\in A(x)} a}$, for $A: \{0, 1\}^* \to \mathbb{N}^*$.
The string $A(x)$ is the ordered sequence of the lengths of substrings of the form $(0 \cdots 01)$ appearing in a bit string $x$, for instance $Q(00110100010) = (3,2,4)$, counting bits $(x_1x_2x_3)$, $(x_5x_6)$ and $(x_7x_8x_9x_{10})$.

\item[$(0\rightarrow E)(1\rightarrow0)$:]
$G_{t}$ = $K'_{(x)}$.
We see this as the edge $\{i,j\}$ exists in $G_t$ if and only if: $\ell(i) = 0$, $i > j$, or $\ell(i) = 0$, $\ell(j) = 1$, $i+1 <j$, which precisely agrees with the definition of $K'_{(x)}$.

\item[$(0\rightarrow E)\left( 1\rightarrow1\right) $:] 
$G_{t}$ = $E'_{(x)}$.
We see this as the edge $\{i,j\}$ exists in $G_t$ if and only if: $\ell(i) = 0$, $i > j$, or $\ell(i) = \ell(j) = 1$, $i+1 = j$, which precisely agrees with the definition of $E'_{(x)}$.

\end{description}
\end{proof}

\section{Randomness only}\label{sec:randomness}

Randomness is an indispensable resource in communication and computational
complexity, and in general is a fundamental tool in many areas of computer
science, statistical mechanics, \emph{etc.} How can we use randomness to
construct $G$?\ Start with the empty graph on $n$ vertices and insert each
edge with probability $1/2$.
The outcome of this random process is the (uniform) random graph and it is denoted by $G_{n,1/2}$ (see \cite{J90}).
The probability that $G_{n,1/2}\cong H$ is nonzero, but exponentially small; in fact, $2^{-\binom{n}{2}}$ if we do not keep into account the size of the isomorphism class.
It follows however that without instructions we can still construct $G_{n,1/2}$.
The randomness used amounts to flip $C(n,2)$ coins, which corresponds to an equal number of random bits. 
The time required for the process consists of a single time-step, since we can
flip all the coins at once.
So, bits here quantify both randomness and information. A quick observation:\ when the length of the instructions is zero, we can construct $G_{n,1/2}$ (with nonzero probability) by the use of $C(n,2)$ bits assigned to the random process; conversely, without randomness, we need $C(n,2)$ bits of information to construct $G$ (with probability one).
An important fact is that the random process is instantaneous.
Randomness is generated by unbiased coins that can be re-used as many times we need. 
On the other hand, if we include time, each random bit can be given by flipping
the same coin -- to be discussed further.

Notice that even if we do not use any instructions, we still need to be able
to identify the dyads of vertices when flipping the coins.
Suppose we have $n=4$. We have $C(4,2)=6$ dyads.
When we flip the coin at time $t=1$, we need to choose two vertices.
Without such a choice, we will not be able to add an edge.
More precisely, we need labels on the vertices so that we can remember whether we have already flipped a coin for a given dyad. Suppose we flip a
coin for $(1,2)$ at time $t=1$.
At time $t=2$, we need to choose another dyad, say $(1,3)$.
The dyad can be chosen only if we remember that we have
already dealt with $(1,2)$.
There are two immediate options: either we have a list of vertex labels, which is then information; or the dyads need to be
chosen at random at each time step. In the latter case we definitely need
more randomness, in terms of more coin flips, but we still can construct the
graph. In fact nothing really happens if a dyad is chosen twice. What if we
are allowed more than a single time-step?\ \emph{Time} is in principle
another plausible resource. We consider a topological version of time, where
each time-step is simply a positive integer, $t\in\mathbb{Z}^{\geq1}$. Time
opens the way to construct $G$ not all at once, but via a sequence of
operations. With time available we can distribute instructions over a number
of different time steps. Hence the amount of information needed at each time
step could be dosed. This is the direction that we will take. Therefore we
choose to look at time not as a resource but as a parameter which is
unmodifiable by resources. The clock will tick whenever we perform an action (%
\emph{e.g.}, adding a vertex).

If we use randomness, time could be subdivided into three operations:\
adding vertices; flipping coins, adding edges. Of course, $G_{n,1/2}$ can be
constructed by flipping $C(n,2)$ coins at the same time, or by flipping a
single coin $C(n,2)$ times. In the second case, if we allow a single coin as
a physical resource to generate randomness, the time for constructing the
graph is $t=C(n,2)$. As might be expected it does not make much sense to
talk about random random graphs, where the pairs of vertices are chosen at
random. In this case, a dyad has probability $1/n^{2}$ to be picked. If we
wait long enough then we just obtain the random graph.

At time $t=1$, we have a graph $G_{1}$, at time $t=2$, $G_{2}$, and finally $%
G_{n}$, at time $t=n$. Notice that not every time-step $t$ needs to
correspond to a graph on $t$ vertices. However, we may assume without loss
of generality that a graph $G_{t^{\prime}}$ has a number of vertices $%
k^{\prime }\geq k$, whenever $t^{\prime}>t$ and $G_{t}$ is on $k$ vertices.
In this notation $G_{n}=G$ at the end of the growth process. For the
moment, let us consider a basic case: we start with $G_{1}=K_{1}$, \emph{i.e.%
}, the empty graph with a single vertex, and add a new vertex at each
time-step. Moreover, suppose that our only available resource is randomness.
As a consequence of this fact, at each time step $t$ we add a single vertex
and choose its neighbours at random. We end up with the following process
introduced in \cite{JS13} and originally studied in the context of graph
limits (see also \cite{L12}). For each $t$, let $\nu_{t}$ be a probability
distribution on $\{1,2,\ldots,t\}$. By denoting as $D_{i}$ a random variable
drawn according to the distribution $\nu_{i}$, we obtain $G_{i}$ by adding a new vertex to $%
G_{i-1}$ and connecting it to a subset of size $D_{i}$ in $V(G_{i-1})$
distributed with respect to $\nu_{i}$. When $\nu_{i}=$ Bi$(i-1,p)$, with $%
p\in\lbrack0,1]$ and $i\geq2$, we get the Erd\H{o}s--R\'{e}nyi random graph.
By modifying $\nu_{i}$, we can end up with various other graph ensembles.

The above process needs randomness and no information at all. How much
randomness?\ 

\begin{proposition}
Let $G_{n}$ be constructed only with the use of randomness and no
information. Then, the amount of randomness needed to construct $G_{n,1/2}$\
and $G_{n}$\ are asymptotically equivalent.
\end{proposition}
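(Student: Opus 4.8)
The plan is to establish matching upper and lower bounds of order $\binom{n}{2}$ on the number of random bits, and then to observe that the gap between them is of strictly lower order. Since the preceding discussion already records that $G_{n,1/2}$ is produced by flipping exactly $\binom{n}{2}$ fair coins, it suffices to show that an arbitrary graph on $n$ vertices can be constructed with $(1+o(1))\binom{n}{2}$ random bits and that this many are also necessary.

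For the upper bound I would simply invoke the Erd\H{o}s--R\'{e}nyi process itself: taking $\nu_i = \mathrm{Bi}(i-1,1/2)$ at every step produces $G_{n,1/2}$, and for any fixed target $G_n$ the event $G_{n,1/2}\cong G_n$ has probability at least $2^{-\binom{n}{2}}>0$. Hence the same $\binom{n}{2}$ coin flips that generate $G_{n,1/2}$ suffice, with nonzero probability, to construct an arbitrary $G_n$; no graph requires more randomness than the random graph does.

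For the lower bound I would argue by counting. Because no information is available, the randomized procedure is fixed in advance and cannot depend on the target, so to be able to output an arbitrary graph it must assign positive probability to each of the non-isomorphic graphs on $n$ vertices. A procedure that uses $b$ unbiased bits has a sample space of size $2^b$ and therefore at most $2^b$ distinct outcomes, so we need $2^b \ge 2^{\binom{n}{2}}/n!$, the right-hand side being the standard lower bound on the number of isomorphism classes (each class has at most $n!$ labelled representatives). This gives $b \ge \binom{n}{2} - \log_2 n!$, and since $\log_2 n! = \Theta(n\log n) = o\!\left(\binom{n}{2}\right)$ the bound reads $b \ge \binom{n}{2}(1-o(1))$.

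Combining the two estimates, $\binom{n}{2}(1-o(1)) \le b \le \binom{n}{2}$, so the randomness needed to construct an arbitrary $G_n$ is asymptotic to $\binom{n}{2}$, exactly the cost of $G_{n,1/2}$. The main obstacle, and the only place where care is genuinely needed, is the lower bound: one must justify that without information the procedure really is graph-oblivious and hence must cover every isomorphism class (rather than being tuned to a single easy target such as $E_n$), and one must confirm that the $n!$ labelling factor --- the sole reason the equivalence is asymptotic rather than exact --- is of strictly smaller order than $\binom{n}{2}=\Theta(n^2)$.
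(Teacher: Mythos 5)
Your proof is correct, but it follows a genuinely different route from the paper's. The paper never proves a lower bound over all procedures: it fixes the sequential process of Section~5 (at step $i$, roughly $\lfloor\log_2(i-1)\rfloor+1$ random bits to choose the degree of the new vertex and up to $\left\lfloor\log_2\left(\binom{i-1}{\lfloor(i-1)/2\rfloor}-1\right)\right\rfloor+1$ bits to choose its neighbour set), derives a recursion for the total bit count $a(n)$, and shows via $\binom{i-1}{\lfloor i/2\rfloor}\le 2^{i-1}$ and a Stirling-type bound that $a(n)=\Theta(n^2)$, the same $\Theta$-class as the $\binom{n}{2}$ flips producing $G_{n,1/2}$; ``asymptotically equivalent'' there means equality of asymptotic efficiency classes for that one canonical process. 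Your two-sided argument is both more general and sharper: the counting bound $2^b\ge 2^{\binom{n}{2}}/n!$ applies to \emph{every} graph-oblivious procedure, not just the paper's, and combined with the trivial upper bound it gives $b=(1+o(1))\binom{n}{2}$, pinning the leading constant rather than only the order of growth. What the paper's computation buys in exchange is an explicit per-step accounting of where the randomness is spent in its vertex-by-vertex process (degree bits plus subset bits), which your argument does not yield. Two small points would make your write-up airtight: state explicitly that a zero-information procedure is a fixed protocol, so ``can construct an arbitrary $G_n$'' forces it to assign positive probability to every isomorphism class (you flag this yourself); and note that even an adaptive protocol using at most $b$ flips has at most $2^b$ distinct outcomes, so the counting bound survives adaptivity.
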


\begin{proof}
We quantify randomness by the number of random bits needed to perform each
choice. The table below lists these bits for the graphs $G_{3},G_{4}$, and $%
G_{5}$. Notation:\ $y_{i}^{j}$ is the $j$-th bit used for choosing
neighbours of vertex $i$ in $G_{i-1}$ and $x_{i}^{j}$ is the $j$-th bit used
for choosing their number, both at time $i$. For $G_{3},G_{4}$, and $G_{5}$,
we need $4,8,14$ bits, respectively:%
\begin{equation*}
\begin{tabular}{ll}
$G_{3}\text{:\quad }$ & $\left( x_{1}^{1}\right) \left(
(x_{3}^{1}x_{3}^{2})(y_{3}^{1})\right) ;$ \\ 
$G_{4}\text{:\quad }$ & $(G_{3})\left(
(x_{4}^{1}x_{4}^{2})(y_{4}^{1}y_{4}^{2})\right) ;$ \\ 
$G_{5}\text{:\quad }$ & $(G_{4})\left(
(x_{5}^{1}x_{5}^{2}x_{5}^{3})(y_{5}^{1}y_{5}^{2}y_{5}^{3})\right) .$%
\end{tabular}%
\ \ \ \ \ \ \ \ 
\end{equation*}%
We have denoted by $(G_{i})$ the random bits for $G_{i}$. The formula for
this integer sequence is $a(1)=0$, $a(2)=1$, \ and for $n\geq 3$, we have (proof below)%
\begin{equation}
a(n)=a(n-1)+b_{o}(n)+\left\lfloor \log _{2}(n-1)\right\rfloor +1,
\label{aodd}
\end{equation}%
for $n$ odd and 
\begin{equation}
a(n)=a(n-1)+b_{e}(n)+\left\lfloor \log _{2}(n-1)\right\rfloor +1,
\label{aeven}
\end{equation}%
for $n$ even, where%
\begin{equation*}
\begin{tabular}{lll}
$b_{e}(n)=\left\lfloor \log _{2}\left( \binom{n-1}{n/2}-1\right)
\right\rfloor +1$ & and & $b_{o}(n)=\left\lfloor \log _{2}\left( \binom{n-1}{%
(n-1)/2}-1\right) \right\rfloor +1.$%
\end{tabular}%
\ \ 
\end{equation*}%
The integer $\left\lfloor \log _{2}(n)\right\rfloor +1$ is the number of
bits in the binary expansion of $n$. For Eqs. \ref{aodd} and \ref{aeven},
let us consider the growth process. At time $t=1$, $G_{1}=K_{1}$. At time $%
t=2$, we add a vertex $2$. There are $2$ possible cases: we flip a coin and
get either $d_{G_{2}}(2)=0$ or $d_{G_{2}}(2)=1$. (Recall that the degree $%
d_{G}(i)$ is the number of neighbours of a vertex $i\in V(G)$.) Hence, $%
a(2)=1$. At time time $t=3$, we add a vertex $3$. There are $3$ possible
cases: $d_{G_{3}}(3)=0,\ldots,d_{G_{3}}(3)=2$. It is evident that the
contribution from this term at time $t=n$ is then $\left\lfloor \log
_{2}(n-1)\right\rfloor +1$ because $d_{G_{n}}(n)=0,\ldots,d_{G_{n}}(n)=n-1$.
Vertex $3$ can choose among $2$ vertices, and in general vertex $n$ can
choose among $n-1$ vertices. What is the $d_{G_{n}}(n)$ with the highest
randomness cost?\ Let us label as $1,2\ldots,n-1$, the $n-1$ vertices in $G_{n}$
potentially adjacent to vertex $n$. The answer is $d_{G_{n}}(n-1)=n/2$ for $%
n $ even and $d_{G_{n}}(n-1)=(n-1)/2$ for $n$ odd. These are the numbers
giving the largest binomial coefficient. Since the binary system starts
enumerating from $0$, we take the number of bits in $C(n-1,n/2)-1$ and $%
C(n-1,(n-1)/2)-1$. Summing up everything, a formula for $n\geq 4$ is%
\begin{eqnarray*}
a(n) &=&\sum_{i=3\text{; even}}^{n-1}\left\lfloor \log _{2}\left( \binom{i-1%
}{i/2}-1\right) \right\rfloor +\sum_{i=3\text{; odd}}^{n-1}\left\lfloor \log
_{2}\left( \binom{i-1}{(i-1)/2}-1\right) \right\rfloor \\
&&+\sum_{i=2}^{n}\left\lfloor \log _{2}(i-1)\right\rfloor +2n-3.
\end{eqnarray*}%
Let us look at the asymptotic behaviour of $a(n)$. First, $\binom{i-1}{i/2}%
\leq 2^{i-1}$ and $\binom{i-1}{(i-1)/2}\leq 2^{i-1}$. Also, there is a
constant $c>0$ such that $n!\leq cn^{n+1/2}e^{-n}$ when $n\rightarrow \infty 
$. By combining these facts, one can see that the asymptotic efficiency
class of $a(n)$ is $\Theta (n^{2})$.
Notably the same happens for $C(n,2)$, which is the number of random bits needed for the uniform random graph, $G_{n, 1/2}$.
\end{proof}

\bigskip

Let consider again the process above, but where the probability distribution
is taken to be uniform -- see \cite{BMS13}. We iteratively construct a graph 
$G_{t}=(V,E)$, starting from $G_{1}=K_{1}$. The $t$-th step of the iteration
is divided into three substeps: \emph{(1)}\ We select a number $k\in
\{0,1,\ldots,t-1\}$ with equal probability. Assume that we have selected $k$. 
\emph{(2)}\ We select $k$ vertices of $G_{t-1}$ with equal probability.
Assume that we have selected the vertices $v_{1},v_{2},\ldots,v_{k}\in
V(G_{t-1})$. \emph{(3)}\ We add a new vertex $t$ to $G_{t-1}$ and the edges $%
\{v_{1},t\},\{v_{2},t\},\ldots,\{v_{k},t\}\in E(G_{t})$. For a graph $G$ on $t$
vertices, the\ \emph{likelihood} of $G$, denoted by $\mathcal{L}(G)$, is
defined as the probability that $G_{t}=G$, where $G_{t}$ is the graph given
by the above iteration: $\mathcal{L}(G):=\emph{Pr}[G_{t}=G]$. For example, $%
\mathcal{L}(K_{t})=1/t!$ and $\mathcal{L}(K_{1,t-1})=\frac{t}{(t!)^{2}}%
\sum_{i=0}^{t-1}i!$, where $K_{1,n-1}$ is a star on $n$ vertices. An
important point is a link between the likelihood and the size of the
automorphism group of $G$. An \emph{automorphism} of a graph $G=(V,E)$ is a
permutation $p:V(G)\longrightarrow V(G)$ such that $\{v_{i},v_{j}\}\in E(G)$
if and only if $\{p(v_{i}),p(v_{j})\}\in E(G)$. The set of all automorphisms
of $G$, with the operation of composition of permutations \textquotedblleft $%
\circ $\textquotedblright , is a subgroup of the permutation group denoted by Aut$(G)$. It
is in fact possible to show that 
\begin{equation*}
\frac{1}{|\operatorname{Aut}(G)|\prod_{i=1}^{t}\binom{i-1}{\lfloor
(i-1)/2\rfloor }}\leq \mathcal{L}(G)\leq \frac{1}{|\operatorname{Aut}(G)|}.
\end{equation*}%
It is plausible to conjecture that the minimum likelihood is attained by the
complete bipartite graph on $n$ vertices, $K_{p-1,p}$, when $n=2p-1$, and $%
K_{p,p}$, when $n=2p$. Numerical evidence is exhibited in \cite{W}. These
complete bipartite graph have a relatively large automorphism group and by
Mantel's theorem are the triangle-free graphs with the largest number of
edges. The analogue conjecture for the maximum seems harder to state. The
computational complexity of the likelihood is an open problem. The original
motivation for introducing the likelihood was to measure how likely is that
a given graph is generated at random. The idea fits the context of
quasi-randomness, \emph{i.e.}, the study of how much a given graph resembles
a random one.

\section{Trees built from limited resources}

We can consider also building trees using the framework developed thus far. 
A \emph{tree} is a graph without cycles.
We grow trees by adding vertices one-by-one. Given a tree $T_{n}$ on $n$ vertices, we can always identify a
vertex $1$, called the \emph{root}, and added at time $t=1$. The $k$\emph{%
-th generation} of the tree are the vertices at distance $k$ from the root.
The \emph{leaves} are pendant vertices, \emph{i.e.} vertices of degree $1$.
The vertex $n$ is always a leaf. Also the root, in our definition, can be a
leaf. When we grow a tree by adding vertices one-by-one then we also add
edges one-by-one. In fact, the number of edges of a tree $T_{n}$ is exactly $n-1$.

\subsection{Trees built with random bits}

The process in Section~\ref{sec:randomness} can be used to grow trees if at each
time step it is restricted to add a single edge. Equivalently, the degree of
vertex $i$ in $T_{i}$ is $1$. It can be done by bypassing the random choice
of the degree. If we keep the random choice of the adjacent vertex, we have
a model of a random tree. At each time step $i$ we add a new vertex and a new
edge incident with that vertex. The neighbour of the vertex is chosen
randomly in $T_{i-1}$. This process is also called the \emph{uniform attachment
model} and $T_{i}$ is usually denoted by UA$(i)$ \cite{SM95}. The graph UA$%
(i) $ can be also obtained by taking one of all the possible spanning trees
of $K_{n}$ at random; the well-known Pr\"{u}fer bijection guarantees that $%
K_{n}$ contains all trees on $n$ vertices. The process gives a sequence of
random recursive trees, where a tree on the vertices $\{1,2,..,n\}$ is \emph{%
recursive} if the vertex labels along the unique path form $1$ to $j$
increase for every $j\in \{2,3,\ldots,n\}$.

For a rooted tree $T$, let $L(T)$ denote the set of its leaf vertices.
We also denote by $P_T(l)$ the (unique) path from the root vertex to the leaf $l \in L(T)$. 
For graphs $G$ and $H$ we denote by $G \cup H$ the \emph{graph union} of $G$ and $H$.
This is the graph $(V(G) \cup V(H), E(G) \cup E(H))$.
Also, let $b(n):= \lfloor \log_2 n \rfloor + 1$, the number of bits needed to represent the integer $n$ as a binary string.

\begin{proposition}
    Let $T$ be any tree on $n$ vertices and let $T_n = \operatorname{UA}(n)$ be the random tree, as constructed previously.
    Then, $\mathcal{L}(T) > 0$ where $\mathcal{L}$ is the likelihood function defined in Section~\ref{sec:randomness}, that is, $\mathcal{L}(T) = \operatorname{Pr}(T = T_n)$.
\end{proposition}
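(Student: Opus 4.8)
The plan is to reduce the claim to a purely combinatorial fact: every tree on $n$ vertices admits a \emph{recursive labeling}, that is, a bijection $V(T) \to \{1,\ldots,n\}$ under which the labels strictly increase along every path issuing from the root. First I would recall how the attachment process assigns probabilities. In $\operatorname{UA}(n)$ we begin from vertex $1$ and, at each step $i \geq 2$, attach the new vertex $i$ to a uniformly chosen vertex of $\{1,\ldots,i-1\}$; this choice has probability $1/(i-1)$ and the steps are independent. Consequently every \emph{specific} labeled recursive tree on $\{1,\ldots,n\}$ is produced with probability exactly $\prod_{i=2}^{n} 1/(i-1) = 1/(n-1)!$, while any labeled tree that fails to be recursive receives probability $0$. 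Writing $r(T)$ for the number of recursive labelings of $T$, and reading $T_n \cong T$ as equality of isomorphism type, we obtain
\[
\mathcal{L}(T) = \operatorname{Pr}(T_n \cong T) = \frac{r(T)}{(n-1)!}.
\]
Thus the proposition is equivalent to the statement $r(T) \geq 1$.

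Next I would exhibit one recursive labeling explicitly. Choose any vertex of $T$ to be the root; this orients the tree and gives each non-root vertex a unique parent, namely its neighbour on the path toward the root. Visit the vertices in breadth-first order from the root and assign labels $1,2,\ldots,n$ in the order of discovery, so that the root receives label $1$. Since a breadth-first traversal always reaches a vertex strictly after its parent, every non-root vertex is given a label strictly larger than that of its parent.

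To confirm that this labeling is recursive, it suffices to check that the labels increase along $P_T(l)$ for each leaf $l \in L(T)$, because these root-to-leaf paths together cover $V(T)$. Each step of such a path moves from a vertex to one of its children, and by the construction the child bears the larger label; hence the labels increase monotonically from the root to $l$. This furnishes a recursive labeling, so $r(T) \geq 1$ and therefore $\mathcal{L}(T) \geq 1/(n-1)! > 0$.

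I do not anticipate a genuine obstacle: the whole content is the existence of a recursive labeling, which holds for every rooted tree by the traversal argument above — indeed any linear extension of the ancestor partial order would serve equally well. The only point needing minor care is the bookkeeping in the first paragraph, namely verifying that the attachment probabilities multiply to $1/(n-1)!$ and that non-recursive labelings carry probability zero, after which the conclusion is immediate.
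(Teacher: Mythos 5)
Your proof is correct, but it takes a genuinely different route from the paper's. The paper argues structurally: it writes $T = \bigcup_{l \in L(T)} P_T(l)$ as a union of root-to-leaf paths, claims each path is individually realisable by the attachment process, shows that two paths can fail to be jointly realisable only if they share a vertex that needs two distinct smaller neighbours -- which would force a cycle in $T$, a contradiction -- and finishes by edge-counting ($T \subseteq T_n$ and both have $n-1$ edges, so $T = T_n$). You instead pin down the law of $\operatorname{UA}(n)$ exactly: it is supported on recursive (increasing) labeled trees, each atom having probability $1/(n-1)!$, so the proposition reduces to the purely combinatorial existence of a recursive labeling, which you produce by breadth-first search. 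Your route buys two things the paper's does not: an explicit quantitative bound $\mathcal{L}(T) \geq 1/(n-1)!$, and a clean treatment of the labeling issue. This last point matters: the paper's claim that \emph{any} individual path $(1, i_1, \ldots, l)$ is realisable is false for an arbitrary labeling of $T$ (if an internal vertex of the path exceeds both of its path-neighbours, it would need two smaller neighbours, which uniform attachment never produces), so the statement is only true when $\mathcal{L}(T) = \operatorname{Pr}(T_n \cong T)$ is read up to isomorphism, and a relabeling step like your BFS construction is exactly what is needed to make the argument airtight; in this sense your proof is the more rigorous of the two, while the paper's is more qualitative and uses only positivity of each attachment probability rather than uniformity. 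One small bookkeeping caveat in your write-up: if $r(T)$ counts recursive \emph{bijections} $V(T) \to \{1,\ldots,n\}$ rather than labeled recursive trees isomorphic to $T$, then the exact formula is $\mathcal{L}(T) = r(T)/\bigl(|\operatorname{Aut}(T)|\,(n-1)!\bigr)$, since $|\operatorname{Aut}(T)|$ distinct bijections yield the same labeled tree; this factor affects neither the equivalence of the proposition with $r(T) \geq 1$ nor your final lower bound.
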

\begin{proof}
    We can describe the tree $T$ as the graph union of the paths to each of its leaves, that is, $T = \bigcup_{l \in L(T)} P_T(l)$.
    Each path $P_T(l)$ takes the form $(1, i_1, i_2, \ldots, l)$, so any individual given path can be generated according to our random process, as there is always nonzero probability for the edge $\{i,j\}$ for $j > i$.
    
    We now need to prove that our random process supports constructing the union of all paths in $T$.
    Choose any two leaves in $T$, $l_1, l_2 \in L(T)$.
    The induced paths from the root of $T$ to $l_1$ and $l_2$ respectively are $P_T(l_1) = (1, i_1, i_2, \ldots, l_1)$ and $P_T(l_2) = (1, j_1, j_2, \ldots, l_2)$.
    From the previous paragraph, there is nonzero probability that $P_T(l_1)$ and $P_T(l_2)$ are individually subgraphs of $T_n$.
    For these paths to have zero probability of simultaneously being in $T_n$, there must be a vertex $i_r = j_s$ for some $r,s$ such that there are edges $\{ i_r, i_{r'}\}, \{ j_s, j_{s'} \}$ for $i_r < i_r'$, $j_s < j_s'$ and $i_{r'} \neq j_{s'}$.
    This is because every vertex $t \in \{1, 2, \ldots, n\}$ has only one neighbour smaller than itself $t' < t$.
    Suppose the condition holds.
    Then, the vertices $\{ 1, i_1, \ldots, i_{r'}, \ldots, i_r, j_1, \ldots , j_{s'}, \ldots, j_s \}$ induce a cycle.
    However, $T$ is a tree, so has no induced cycle and we have a contradiction.
    This means $P_T(l_1) \cup P_T(l_2)$ is a subgraph of $T_n$ with nonzero probability. 
    
    Since the argument holds for any $l_1, l_2 \in L(T)$, we have that there is nonzero probability that $T = \bigcup_{l \in L(T)} P_T(l)$ is a subgraph of $T_n$.
    Indeed, since both $T$ and $T_n$ are trees on $n$ vertices, they both have $n-1$ edges and so $T\subseteq T_n $ implies that $ T=T_n$ and the result follows.
\end{proof}

\subsection{Trees built with instructions and memory}

We can also consider the model earlier in the text, where Alice (\textsf{A}) sends instructions to Bob (\textsf{B}), and \textsf{B} has memory.
We can bound the size of the instructions and memory needed to construct an arbitrary tree $T$.

\begin{proposition}\label{prop7}
Let $T_n$ be a tree on $n$\ vertices. Then $T_n$\ can be constructed
with the use of $O(n\log n)$ bits of instructions and $O(n\log n)$\ bits of memory.
\end{proposition}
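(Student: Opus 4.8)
The plan is to root the tree and turn its construction into a sequence of ``parent pointer'' instructions. First I would fix the root to be vertex $1$ and choose a vertex ordering $1,2,\ldots,n$ in which every vertex appears after its parent (the neighbour lying on the path towards the root); any breadth-first or depth-first traversal from the root produces such an ordering. With this ordering each non-root vertex $t$ has a unique parent $p(t)<t$, and the edge set of $T_n$ is exactly $\{\{t,p(t)\} : 2\le t\le n\}$. Since adding vertex $t$ then contributes only the single edge $\{t,p(t)\}$ incident to the newly placed vertex, the construction never adds an edge inside $G_{t-1}$, so it respects the induced-subgraph convention of Section~\ref{sec:oneb}. Thus it suffices for \textsf{B} to be told, at each step $t$, which earlier vertex is $p(t)$.

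Next I would account for the two resources separately. For memory, assign each vertex a distinct binary label of length $b(n)=\lfloor\log_2 n\rfloor+1$, which \textsf{B} writes to the per-vertex memory as the vertex is placed; storing $n$ such labels uses $n\,b(n)=\Ord{n\log n}$ bits of memory and guarantees that every vertex can be uniquely identified by its stored label. For instructions, \textsf{A} sends at step $t$ the label of $p(t)$, costing $b(n)$ bits (the root step $t=1$ carries no pointer); \textsf{B} then places vertex $t$ and joins it to the unique earlier vertex whose stored memory label matches the received string. Summing the per-step cost gives at most $(n-1)\,b(n)=\Ord{n\log n}$ bits of instructions (equivalently $\log_2(n!)=\Theta(n\log n)$), and including the $b(n)$ bits needed to name each new vertex only changes the constant. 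Both budgets are therefore $\Ord{n\log n}$.

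The routine part is the counting bound $\sum_{t=2}^n b(t)=\Ord{n\log n}$, which follows at once from $b(t)\le\log_2 n+1$. The point requiring care is that every action of \textsf{B} must be consistent with the model, in which \textsf{B} has memory but no computational power and in particular cannot count: here \textsf{B} only writes a label supplied by \textsf{A} and, when adding an edge, matches a received string against the stored labels, exactly the label-reading primitive already used in Proposition~\ref{oneb}. Hence no arithmetic is performed and the construction is faithful. I expect the main (though modest) obstacle to be precisely this bookkeeping --- verifying that identifying the parent by label-matching, rather than by position or by counting, stays within \textsf{B}'s allowed operations while keeping the instruction and memory costs at $\Ord{n\log n}$.
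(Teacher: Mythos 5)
Your proposal is correct and takes essentially the same route as the paper's own proof: instructions name the unique earlier neighbour of each newly added vertex, per-vertex memory labels let \textsf{B} identify that neighbour by label-matching, and the total cost is bounded by a sum of $b(t)$ terms giving $O(n\log n)$ for both resources. If anything, your explicit choice of a parent-respecting (BFS/DFS) ordering fills in a detail the paper leaves implicit, since under an arbitrary vertex labelling a tree vertex may have several earlier neighbours, and the paper's claim that each step adds exactly one edge silently assumes such an ordering.
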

\begin{proof}
    Consider constructing tree $T_t$ given that we have the tree $T_{t-1}$.
    Since $T_{t}$ has $t-1$ edges and $T_{t-1}$ has $t$ edges, we only add one edge, between vertex $t$ and some other vertex $t' < t$.
    The instructions have to specify which vertex $t'$ will be $t$'s neighbour.
    This requires $b(t-1)$ bits.
    We also require a label in memory for each vertex $t'$, each of which requires $b(t')$ bits.
    Summing up, for the graph $T_n$, we require $\sum^{n-1}_{t=1}b(t)$ bits of memory for the vertex labels and $\sum^{n-1}_{t=1}b(t)$ bits for the instructions.
    Finally,
    \begin{equation*}
        \begin{aligned}
            \sum^{n-1}_{t=1}b(t) &=  \sum^{n-1}_{t=1} \left(\lfloor \log_2 n \rfloor +1 \right) \leq  \sum^{n-1}_{t=1} \log_2 n  + n-1 = \log_2((n-1)!) + n-1 \\
            &\leq (n-1)\left( \log_2(n-1) - \log_2(\mathrm{e}) + 1 \right) +O(\log_2(n-1)) = O(n \log n).
        \end{aligned}    
    \end{equation*}    
\end{proof}

\section{Conclusions}

In this work we have considered a scenario where two parties Alice (\textsf{A}) and Bob (\textsf{B}) construct a graph together, given limited communication, memory and randomness.
We enumerate the different classes of graphs that \textsf{A} and \textsf{B} can construct under different constraints on these resources.

A number of open questions remain: for instance, in Section~\ref{sec:oneb} we saw that giving Bob one bit of memory per vertex lifted the class of threshold graphs to a larger class.
Indeed, it is well known that computational problems such as graph isomorphism and maximum clique are easy when the instances are taken as threshold graphs.
Perhaps for this extended class of graphs such problems are also easy.
Indeed, it would be interesting to construct a hierarchy of graphs with an increasing number of bits of memory and at which point these problems increase in difficulty.
Maybe there there is some correspondence with some well-known class of graphs.

The case for fading memory where $2 < \mathsf{L} < t$ needs to be investigated in a similar fashion as does the case of $\mathsf{B}$ having access to differing numbers of random bits.

We may also ask the converse question: given a graph $G$, how many bits do $\mathsf{A}$ and $\mathsf{B}$ need to build it?
Recognising threshold graphs is linear-time~\cite{CH73,HIS78}.
Is there a polynomial-time procedure for determining the number of bits of instructions and memory needed to construct $G$, beyond the $\Theta(n^2)$ bound? 

\bigskip

\noindent \emph{Acknowledgments. }This paper follows a talk given by SS at the
\textquotedblleft Doctoral Workshop on Mathematical and Engineering Methods
in Computer Science (MEMICS)\textquotedblright , Tel\v{c}, Czech Republic,
October 23--25, 2015. SS would like to thank the organisers of MEMICS for
their kind invitation. SS would also like to thank Jan Bouda for being a
great host. 
This paper forms part of a Year 2 undergraduate project for AM.
We\ also would like to thank Nate Ackerman, Cameron Freer, Christopher Banerji, Josh Lockhart,
Octavio Zapata, and Vern Paulsen, for helpful discussion and their patience.

\end{document}